\numberwithin{equation}{section}
\newtheorem{Theorem}{Theorem}[section]
\newtheorem{Lemma}[Theorem]{Lemma}
\theoremstyle{definition}
\newtheorem{Definition}[Theorem]{Definition}
\newtheorem{Example}[Theorem]{Example}
\newtheorem{Remark}[Theorem]{Remark}}
\def \wt{\widetilde}
\def \pa{\partial}
\def \ds{\displaystyle}
\def\nn{\nonumber}
\renewcommand{\d}{\mathrm d}
\newcommand{\C}{\mathbb{C}}
\def\p{\mathbf p}
\def\q{\mathbf q}
\renewcommand{\le}{\left}
\newcommand{\ri}{\right}
\newcommand{\1}{\mathbf{1}}
\def \Tr {\operatorname{Tr}}
\begin{document}
\allowdisplaybreaks

\newcommand{\arXivNumber}{2103.09681}

\renewcommand{\PaperNumber}{081}

\FirstPageHeading

\ShortArticleName{Quantization of Calogero--Painlev\'e System}

\ArticleName{Quantization of Calogero--Painlev\'e System and \\ Multi-Particle Quantum Painlev\'e Equations II--VI}

\Author{Fatane MOBASHERAMINI~$^{\rm a}$ and Marco BERTOLA~$^{\rm ab}$}
\AuthorNameForHeading{F.~Mobasheramini and M.~Bertola}

\Address{$^{\rm a)}$~Department of Mathematics and Statistics, Concordia University,\\
\hphantom{$^{\rm a)}$}~1455 de Maisonneuve W., Montreal, QC H3G 1M8, Canada}
\EmailD{\href{mailto:f_moba@live.concordia.ca}{f$\_$moba@live.concordia.ca}, \href{mailto:marco.bertola@concordia.ca}{marco.bertola@concordia.ca}}

\Address{$^{\rm b)}$~SISSA, Area of Mathematics, via Bonomea 265, Trieste, Italy}

\ArticleDates{Received March 19, 2021, in final form August 31, 2021; Published online September 07, 2021}

\Abstract{We consider the isomonodromic formulation of the Calogero--Painlev\'e multi-particle systems and proceed to their canonical quantization. We then proceed to the quantum Hamiltonian reduction on a special representation to radial variables, in analogy with the classical case and also with the theory of quantum Calogero equations. This quantized version is compared to the generalization of a result of Nagoya on integral representations of certain solutions of the quantum Painlev\'e equations. We also provide multi-particle ge\-neralizations of these integral representations.}

\Keywords{quantization of Painlev\'e; Calogero--Painlev\'e; Harish-Chandra isomorphism}

\Classification{70H08; 81R12}

\section{Introduction}\label{sec1}

The six Painlev\'e equations were discovered by \cite{Fuchs,Gam, Pain} as a result of the search for second-order ordinary differential equations in the complex plane, whose only movable singularities are poles, and are not generally solvable in terms of elementary functions. These equations were subject of extensive scrutiny by mathematicians and physicists, and they find a wide range of applications. These equations also arise as reductions of
soliton equations~\cite{Flas}, which are solvable by inverse scattering. Other
applications include quantum gravity and string
theory~\cite{Brez,Doug,Gros}, $\beta$-models~\cite{Boro}, topological field theories~\cite{Dub}, random matrices~\cite{ForWi, TW} and
stochastic growth processes~\cite{Lee}.

Historically, the Hamiltonian structure of the Painlev\'e equations was studied in
\cite{DubMaz,Mal,Oka} and they all can be written as a time-dependent Hamiltonian system
\begin{equation*}
\Ddot{q}=-V(q;t),
\end{equation*}
for some potential function $V$ of dependent and independent variables.

On the other hand, these equations admit a natural generalization to multi-particle Hamiltonian systems with an interaction of Calogero type (rational, trigonometric, or elliptic). This approach was pioneered by K.~Takasaki \cite{Tak} as an attempt at de-autonomization of a result by Inozemtsev~\cite{Inoz}. The result yields a new system of multi-component Hamiltonian operators that was named {\it Calogero--Painlev\'e correspondence.}

The integrability of this system was conjectured by Takasaki and was shown in \cite{BerCafRub} via an isomonodromic formulation in terms of a Lax pair of $2N\times 2N$ matrices, where $N$ is the number of particles.

On a parallel track, the integrability of the classical Calogero--Painlev\'e equations has attracted increasing interest about their quantization. As a result, the discussion about the integrability of such a system and their corresponding solutions has appeared in the work of mathematicians and mathematical physicists such as H.~Nagoya \cite{Nag}, K.~Okamoto \cite{Oka}, A.~Zabrodin and A.~Zotov~\cite{ZZ}.

In \cite{ZZ}, A.~Zabrodin and A.~Zotov show that all Painlev\'e equations {I}--{VI} can be presented in the form of the non-stationary Schr\"odinger equation in imaginary time
\begin{equation*}
\partial_t\Psi =H \Psi
\end{equation*}
with $H$ the standard Schr\"odinger operator
\begin{equation}
H=\frac{1}{2}\partial_z^2+V(z,t),\label{Schro}
\end{equation}
which is a natural quantization of the classical Calogero--Painlev\'e Hamiltonian operator
\begin{equation*}
H(p,q,t)=\frac{p^2}{2}+V(q;t).
\end{equation*}
In this article, we follow a similar logic starting from the Hamiltonian formulation of the multi-particle Hamiltonians in \cite{BerCafRub}, with a different approach to the system that is called the quantum Calogero--Painlev\'e Hamiltonian system. This task is achieved by applying the canonical quantization of the form (described in details in Section~\ref{sec2})
\begin{gather*}
q_{ij} \longrightarrow q_{ij} \qquad \text{and} \qquad p_{ij} \longrightarrow \hbar\frac{\partial}{\partial q_{ji}}
\end{gather*}
to the Hamiltonian operators in isomonodromic formulation introduced in \cite{BerCafRub}. As a result of this quantization, we obtain a multi-particle quantized Hamiltonian system that satisfies the Schr\"odinger equation of the same form as~\eqref{Schro}.

We also compare the result of this quantization to what H.~Nagoya \cite{Nag} introduces as quantum Painlev\'e Hamiltonian system for a single particle wave function with coordinate $z$.
We generalize these Hamiltonian operators to the case of $N$ particles with coordinates $z_\rho$, $\rho=1,\dots ,N$.

We extend the integral representation of solutions given in~\cite{Nag} to the multi-component integral solution of the Schr\"odiner equation for $N$-particle quantum Painlev\'e equations. These integral representations are defined only for the quantum Painlev\'e equation II--VI; these solutions should be understood as the quantum counterpart of rational solutions and it is well known that the first Painlev\'e equation does not admit rational solutions, which heuristically explains the absence of integral representations for solutions thereof. Intriguingly, these integral representations are presented as some type of $\beta$ integrals that appear in other areas such as conformal field theory~\cite{JimNag}, $\beta$-ensembles~\cite{Burger}, the theory of orthogonal polynomials, and hypergeometric functions.

Finally, we show that under some constraints on the parameters of the Hamiltonian operators obtained in the first two sections of this article these generalized Nagoya integrals provide solutions for the quantization of the multi-particle Hamiltonian systems described in the first part.

\section{Isomonodromic formulation and quantization}\label{sec2}

In \cite{BerCafRub} the authors provided the answer to a conjecture proposed by K.~Takasaki in~\cite{Tak}; Takasaki considered the de-autonomization of the Calogero systems proposed by Inozemtsev~\cite{Inoz} by observing that for Painlev\'e VI the rank-one Inozemtsev system reduces to the autonomous version of the Hamiltonian form of Painlev\'e VI that was first written by Gambier and then rediscovered by Manin. This leads to the postulation of the deautonomized versions, and the conjecture that they should be describing isomonodromic deformations of an appropriate system. For this reason, Takasaki coined the term ``Painlev\'e--Calogero correspondence''.

The core idea of \cite{BerCafRub} is as follows; they consider a (complexified) phase space consisting of pairs of $N\times N$ matrices $\p$, $\q$, identified with the cotangent bundle of $X = {\rm Mat}_{N\times N}(\mathbb C)$. The (complex) symplectic form is then
\begin{equation}
\label{sympstr}
\omega = \Tr ( \d \p \wedge \d \q) = \sum_{i,j} \d p_{ij} \wedge \d q_{ji} \quad \Leftrightarrow\quad \{p_{ab}, q_{cd}\} = \delta_{ad} \delta_{bc}.
\end{equation}
Consider an Hamiltonian $H(\p,\q)$ which is conjugation invariant, namely
\begin{gather*}
H(\p,\q) = H\big(C\p C^{-1} ,C\q C^{-1}\big) \qquad \hbox{with} \quad C\in {\rm GL}_N(\C).
\end{gather*}
Then Noether's theorem guarantees the conservation of the associated momenta $M = [\p,\q]$ for this group hamiltonian action. This allows us to fix a particular value of the momentum $M$ and investigate the reduced system on the leaf of this value.
If the momentum $M$ is fixed to be of the form
\begin{gather}\label{mom}
M=[\p,\q] = {\rm i}g\big(\1-v^{\rm T} v\big),\qquad \hbox{with} \quad v:= (1,\dots, 1),
\end{gather}
then one can apply a theorem used in the classical theory of Calogero system and due to Kazhdan, Kostant, and Sternberg \cite{KKS}. It states that we can diagonalize $\q = CXC^{-1}$ with $X= \operatorname{diag} (x_1,\dots, x_n)$ in such a way that the matrix $Y = C^{-1} \p C$ is of the form
\begin{gather*}
Y = \operatorname{diag} (y_1,\dots, y_N) + \le[\frac {{\rm i}g}{x_j-x_k}\ri]_{j,k=1}^N.
\end{gather*}
The variables $y_j$'s are the momenta conjugated to the eigenvalues $x_j$ in the reduced system.
Then the first result of~\cite{BerCafRub} was that all the Calogero--Painlev\'e systems of~\cite{Tak} are the reduction on the particular value of the momentum~\eqref{mom} of a list of conjugation-invariant Hamiltonian systems:
\begin{gather}
\wt{H}_{\rm I}
= \Tr\left(\frac{\p^2}{2}-\frac{\q^3}{2}-\frac{t\q}{4}\right),\nn\\
\wt{H}_{\rm II}
=\Tr\left(\frac{\p^2}{2}-\frac{1}{2}\left(\q^2+\frac{t}{2}\right)^2-\theta \q\right),\nn \\
t\wt{H}_{\rm III}
= \Tr\left(\p^2\q^2 -\big(\q^2+(\theta_0-\theta_1)\q-t\big)\p-\theta_1 \q\right),
\nn \\
\wt{H}_{\rm IV}
=\Tr\left(\p\q (\p-\q-t )+\theta_0 \p-(\theta_0+\theta_1)\q\right),\nn \\
t\wt{H}_{\rm V}
=\Tr\left(\p(\p+t)\q(\q-1)+(\theta_0-\theta_2)\p\q+\theta_2 \p+(\theta_0+\theta_1)t\q\right),\nn \\
t(t-1)\wt{H}_{\rm VI}
=\Tr\bigg(\q\p\q\p\q-t\p\q^2\p+t\p\q\p-\p\q\p\q-\theta \q\p\q
+t(\theta_0+\theta_1)\p\q
\nn \\
\hphantom{t(t-1)\wt{H}_{\rm VI}=}{} +(\theta_0+\theta_t)\p\q-\theta_0 t\p
 -\frac{1}{4}\big(k^2-\theta^2\big)\q\bigg),\label{IsoHamil}
\end{gather}
where $\theta$, $\theta_0$, $\theta_1$, $\theta_2$, $\theta_t$, and $k$ are arbitrary parameters in~$\mathbb{C}$.
These Hamiltonians should be thought of as non-commutative polynomials in $\p$, $\q$ generalizing the Okamoto Hamiltonians for the six Painlev\'e equations.

They showed that these Hamiltonians describe the isomonodromic deformations of an ODE in the $z$-plane for a matrix $\Phi(z)$ of size $2N\times 2N$, which reduces, for $N=1$, to the classical Lax pair formulation for Painlev\'e equations (see, e.g.,~\cite{JMU}).

To give a meaningful nontrivial example, we consider the Painlev\'e VI case:
\begin{equation*}
 \begin{cases}
	 \ds \frac{\partial \Phi}{\partial z} = \left( \frac{A_0}{z} + \frac{A_1}{z - 1} + \frac{A_t}{z-t} \right)\Phi= {\bf A}(z) \Phi,\vspace{1mm}\\
\ds \frac{\partial \Phi}{\partial t} = -\left(\frac{A_t}{z-t} + B\right)\Phi = {\bf B}(z) \Phi,
\end{cases}
\end{equation*}
where the matrices are explicitly given by
\begin{gather*}
A_0 := \left[ \begin{matrix} -1-\theta_{{t}}&\ds\frac{\q }{t} -1
\\ 0&0\end{matrix} \right],
\qquad
A_1 := \left[ \begin{matrix} -\q \p + \ds\frac{1}2(k + \theta) & 1 \vspace{1mm}\\
				 (\theta - \q \p)\q \p + \ds\frac{1}4\big(k^2 - \theta^2\big) & \q \p + \ds\frac{1}2(k - \theta)
	\end{matrix} \right], \nn
\\	
		A_t := \left[ \begin{matrix} \q \p -\theta_0 & -\ds\frac{\q }t \\
				t(-\theta_0 + \p\q )\p	 & -\p\q \end{matrix} \right], \qquad
	B := \left[ \begin{matrix} \ds\frac{t([\q ,\p]_+ -\theta_0) + \theta \q - [\q \p,\q ]_+}{t(t-1)} & 0 \\
					-\theta_0 \p + \p\q \p & 0 \end{matrix} \right],
\end{gather*}
where $\theta=\theta_0+\theta_1+\theta_t$. The expression $[X,Y]_+$ stands here for the anti-commutator of the noncommutative symbols $X$, $Y$, namely $[X,Y]_+ = XY+YX$.
The partitioning is in $N\times N$ block, and all scalars are automatically considered mutliple of the identity matrix of size $N$.
The isomonodromic equations consist in the ``zero-curvature'' equations for the pair
\begin{gather*}
\frac {\pa {\bf A}(z)}{\pa t} -\frac {\pa {\bf B}(z)}{\pa z} +[{\bf A}(z), {\bf B}(z)]=0,
\end{gather*}
and, with some elementary algebra, they become the following evolutionary system for the operators $\p$, $\q$:
 	\begin{gather}\label{dynamicsPVI}
	 \begin{cases}
		\dot{\q } = \mathcal A(\q ,\p), \\
		\dot{\p} = \mathcal B(\q ,\p),
	\end{cases}
	\end{gather}
where the non-commutative polynomials $\mathcal A$, $\mathcal B$ are given by
	\begin{gather}
	t(t-1)\mathcal A(\q ,\p) := -\theta_0 t + (\theta_0 + \theta_t)\q + (\theta_0 + \theta_1)t\q -\theta \q ^2 -2 \q \p\q \nonumber \\
\hphantom{t(t-1)\mathcal A(\q ,\p) :=}{} + t[\p,\q ]_+ - \big[t\p,\q ^2\big]_+ + [\q \p\q ,\q ]_+, \nonumber \\
	t(t-1)\mathcal B(\q ,\p) := \frac{1}4\big(k^2 - \theta^2\big) -(\theta_0 + \theta_t)\p - (\theta_0 + \theta_1)t\p + \theta[\q ,\p]_+ - t\p^2 \nonumber\\
\hphantom{t(t-1)\mathcal B(\q ,\p) :=}{} + t\big[\q ,\p^2\big]_+ + \p\big(2\q - \q ^2\big)\p - [\q ,\p\q \p]_+ .\label{defABPVI}
\end{gather}
The key observation, which is the initial thrust of our present paper, is the following: in the above derivation of the zero curvature equations the symbols $\p$, $\q$ can be taken in some arbitrary non-commutative algebra. In the case of \cite{BerCafRub} where~$\p$, $\q$ are matrices, the above equations turn out, by inspection, to be Hamiltonian equations with respect to the symplectic structure~\eqref{sympstr} with an Hamiltonian given by
 \begin{gather*}
t(t-1)H =
\Tr \bigg( \q\p\q\p\q - t \p\q^2 \p + t \p\q\p - \p\q\p\q - \theta\q\p\q + t(\theta_0 + \theta_1) \p\q\\
\hphantom{t(t-1)H =}{} + (\theta_0 + \theta_t)\p\q - \theta_0 t \p - \frac{1}{4} \big(k^2- \theta^2\big) \q\bigg).
\end{gather*}
Analogous considerations apply to each of the other cases, with the Hamiltonians given in \eqref{IsoHamil}.

{\bf Quantization.}
In view of the considerations above, we want to consider the canonical quantization of the symplectic structure \eqref{sympstr}. The main logic is that we keep equations \eqref{dynamicsPVI} (and the other equations corresponding to each of the other cases listed in~\cite{BerCafRub}) and seek a Hamiltonian formulation with {\it quantum} Hamiltonians. The canonical quantization in ``Schr\"odinger'' representation amounts to considering the entries of $\q$ as multiplication operators and the entries of~$\p$ corresponding differential operators as follows
\begin{equation}
q_{ij} \longrightarrow q_{ij} \qquad \text{and} \qquad p_{ij} \longrightarrow \hbar\frac{\partial}{\partial q_{ji}}.
\label{quan}
\end{equation}

The effect of this canonical quantization is that we cannot simply take the expressions~\eqref{IsoHamil} as Hamiltonians generating the relevant equations of motions like \eqref{dynamicsPVI} because there are issues of normal ordering. To explain the issue we point out that in the classical case the expressions $\Tr (\p\q)$ and $\Tr(\q\p)$ coincide, but if $\p$,~$\q$ are quantum operators~\eqref{quan} then these two expressions differ. This should explain why the {\it quantum} version of the Hamiltonians~\eqref{IsoHamil} will be slightly different due to the fact that the correct scheme depends on the non-commutativity of the traces in this case. Note that the commutation relations that lead to these Hamiltonian operators read the following equations for each of the Calogero--Painlev\'e equations:
\begin{gather*}
\hbar \dot{\q}=[H_J,\q] \qquad \text{and} \qquad \hbar \dot{\p}=[H_J,\p], \qquad J \in \{{\rm I},\dots, {\rm VI}\}.
\end{gather*}
The correct Hamiltonian operators to which we apply the quantization, are the following ope\-rators
\begin{gather}
t\wt H_{\rm III}= \Tr\bigg(\frac{\p^2\q^2+\q^2\p^2}{2}-\frac{\q^2\p+\p\q^2}{2} -(\theta_0-\theta_1)\q\p+t\p-\theta_1\q\bigg),
\nonumber\\
\wt H_{\rm IV} =\Tr\bigg(\p\q\p-\frac{\p\q^2+\q^2\p}{2}-t\p\q+\theta_0p-(\theta_0+\theta_1)\q\bigg),\nonumber\\
t\wt H_{\rm V}=\Tr\bigg(\frac{\p^2\q^2+\q^2\p^2}{2}-\frac{\p^2\q+\q\p^2}{2}+\frac{t\big(\p\q^2+\q^2\p\big)}{2} \nonumber\\
\hphantom{t\wt H_{\rm V}=}{}
+(\theta_0-\theta_2-t)\p\q+\theta_2\p+(\theta_0+\theta_1)t\q\bigg),\nonumber\\
t(t-1) \wt H_{\rm VI}
=\Tr\bigg(\q\p\q\p\q-t\p\q^2\p+t\p\q\p-\frac{\p\q\p\q+\q\p\q\p}{2}-\theta \q\p\q
\nonumber\\
\hphantom{t(t-1) \wt H_{\rm VI}=}{}+t(\theta_0+\theta_1)\p\q+(\theta_0+\theta_t)\p\q-\theta_0t\p-\frac{1}{4}\big(k^2-\theta^2\big)\q\bigg),\label{revisedCP}
\end{gather}
whereas the Painlev\'e I and Painlev\'e II Hamiltonians remain formally the same. Note that we use $\p$,~$\q$ here and below to denote the {\it quantum} operators, without further notice.

The operators above are determined, by inspection, requiring that the evolution equations for $\p$,~$\q$ remain of the same form as those in \cite{BerCafRub}.

\begin{Example}We compute $[\p,\Tr(\p\q\p\q)]$ both quantistically and classically to show the origin of the difference in the quantum Hamiltonians~\eqref{revisedCP}. We start with the classical computation where $\{p_{ij}, q_{k\ell}\}= \delta_{i\ell} \delta_{jk}$:
\[
\big\{\p,\Tr(\p\q\p\q)\big\} = 2 \p\q\p.
\]
In computing this we have used also the cyclicity of the trace.

Viceversa, considering the quantized operators, the same
expression yields
\begin{gather*}
[\p,\Tr(\p\q\p\q)]_{ij} =
\sum_{\alpha,\beta,\mu,\nu} \le[p_{ij},p_{\alpha\beta}q_{\beta \mu}p_{\mu \nu}q_{\nu \alpha}\ri]
\nonumber \\
\hphantom{[\p,\Tr(\p\q\p\q)]_{ij}}{}
= \sum_{\alpha,\beta,\mu,\nu} p_{\alpha\beta}\le(p_{ij}q_{\beta \mu}p_{\mu \nu}q_{\nu \alpha}-q_{\beta \mu}p_{\mu \nu}q_{\nu \alpha}p_{ij}\ri)
\end{gather*}
we add and subtract $q_{\beta \mu}p_{ij}p_{\mu \nu}q_{\nu \alpha}$ to the expression inside the bracket, combining the terms and using the commutation relation $[p_{ij},q_{kl}]=\hbar \delta_{il}\delta_{jk}$:
\begin{gather*}
[\p,\Tr(\p\q\p\q)]_{ij}= \hbar \sum_{\alpha,\beta,\mu,\nu} p_{\alpha\beta}\le( \delta_{i\mu}\delta_{j\beta}p_{\mu\nu}q_{\nu\alpha}+ \delta_{i\alpha} \delta_{j\nu}q_{\beta\mu}p_{\mu\nu}\ri) \nonumber \\
\hphantom{[\p,\Tr(\p\q\p\q)]_{ij}}{} = \big( \hbar \p^2\q+ \hbar\p\q\p \big)_{ij}= \big(2\hbar \p\q\p + \hbar^2 \p\big)_{ij}.
\end{gather*}

 Since the desired term in the equation of motion is $2\hbar \p\q\p$ we need to replace the term $\Tr(\p\q\p\q)$ in the Hamiltonian with a ``symmetrized'' version $\frac{1}{2}\Tr(\p\q\p\q+\q\p\q\p)$. Indeed, one then similarly computes:
\begin{gather*}
\le[\p,\frac{1}{2}\Tr(\p\q\p\q+\q\p\q\p)\ri] = 2 \hbar \p\q\p,
\end{gather*}
which is one of the terms that appears in the expression of $\mathcal A$ in equation of motion~\eqref{defABPVI}.
\end{Example}

\subsection{Quantization in radial form}
Having established the correct context of the quantization of the isomonodromic equations we proceed now to the quantum version of the Kazhdan--Kostant--Sternberg reduction \eqref{mom}. The first step is to express the quantum Hamiltonians~\eqref{revisedCP} in terms of the eigenvalues. The fact that this is at all possible is simply a consequence of the invariance of the Hamiltonians under conjugations.
To do so, we need to use the Harish-Chandra homomorphism \cite{Etin}.

\begin{Definition}
Let $\mathcal M$ be the manifold of diagonalizable matrices and denote
\begin{equation*}
\mathcal{D}(\mathcal M)^{\rm G}\qquad \text{and} \qquad \mathcal{D}({\rm Diag}({\rm GL}_n))^{\rm W}
\end{equation*}
the adjoint-invariant subalgebra of the algebra of differential operators over $\mathcal M$ and the Weyl-invariant subalgebra of the algebra of differential operators over diagonal matrices, respectively.

The canonical isomorphism
\begin{equation}
\mathcal{H}_{\rm c}\colon \ \mathcal{D}(M)^{\rm G}\longrightarrow \mathcal{D}({\rm Diag}({\rm GL}_n))^{\rm W}\label{HCmap}
\end{equation} is called the Harish-Chandra map.
\end{Definition}

This means the following: for a character function $\Psi(Q)$ (i.e., $\Psi(Q)=\Psi\big(GQG^{-1}\big)$, $G\in {\rm GL}_n$) and $\mathcal{L}$ a differential operator invariant under the adjoint map we have
\begin{equation*}
(\mathcal{L}\Psi)|_{\rm Diag}=\mathcal{H}_{\rm c}(\mathcal{L})(\Psi|_{\rm Diag}),
\end{equation*}
where $\mathcal{H}_{\rm c}(\mathcal{L})$ is a differential operator on the eigenvalues.

Our goal now is to make this isomorphism completely explicit and subsequently express all Hamiltonians in~\eqref{revisedCP} as differential operators acting on the eigenvalues when applied to character functions or pseudo-character functions, namely $\Psi(\q)= \Psi\big(G\q G^{-1}\big) {\rm e}^{\theta(G,\q)}$, with $\theta$ an appropriate cocycle.

{\bf Explicit construction of Harish-Chandra isomorphism.}
To make the Harish-Chandra homomorphism \eqref{HCmap} explicit, we write the matrix $Q=Z+M$ with $Z$ diagonal and $M$ off-diagonal: we then act with an infinitesimal conjugation up to order two in $M$ to diagonalize it.
Concretely this means the following; we conjugate the matrix $Q= Z + M$ by a matrix of the form $G:= {\rm e}^{A^{(1)} + A^{(2)}}$ where $A^{(1)}$ is assumed to be of first order in the entries of $M$ and $A^{(2)}$ of second order and both are off-diagonal matrices.

We then impose that the conjugation of $Q$ by $G$ is diagonal up to order $2$.
Then
\begin{gather}
{\rm e}^{{\rm ad}_{A^{(1)}+A^{(2)}}}(Z+M) =
Z+M+\big[A^{(1)}+A^{(2)},Z+M\big]+\frac{1}{2}\big[A^{(1)},\big[A^{(1)},Z\big]\big]+\mathcal{O}(3) \nn \\
 \qquad{} =
Z+M+\big[A^{(1)},Z\big]+\big[A^{(1)},M\big]+\big[A^{(2)},Z\big]+\frac{1}{2}\big[A^{(1)},\big[A^{(1)},Z\big]\big]+\mathcal{O}(3),\label{Zad}
\end{gather}
where $\mathcal O(3)$ denotes terms of order $3$ or higher in the entries of $M$. We need to impose that the result is a diagonal matrix up to the indicated order. Separating the equations according to their order in $M$ we obtain
\begin{gather}
\big[Z,A^{(1)}\big] =M \qquad \text{at order 1},\label{A1}\\
\big[Z,A^{(2)}\big] = \big[A^{(1)},M\big] + \frac 12 \big[A^{(1)},\big[A^{(1)},Z\big]\big] \qquad \text{at order 2}.\label{A2}
\end{gather}
The matrices $A^{(1,2)}$ are off-diagonal, and the ${\rm ad}_Z$ operator is invertible on the subspace of off-diagonal matrices so that we can solve the two equations above to obtain
\begin{gather}
A^{(1)}_{ab}=\frac{M_{ab}}{z_a-z_b}, \nonumber\\ A^{(2)}_{ab}=-\frac{1}{2}\frac{\big[A^{(1)},\big[A^{(1)},Z\big]\big]_{ab}}{z_a-z_b}=\frac{1}{2}\frac{\big[A^{(1)},M\big]_{ab}}{z_a-z_b} =\frac{M_{ac}M_{cb}}{(z_a-z_c)(z_a-z_b)}.\label{A12}
\end{gather}

Substituting \eqref{A12} into \eqref{Zad} we obtain a diagonal matrix $\wt Z$ which is a shift of the matrix $Z$ as follows
\begin{gather}
\wt{Z}= Z+\frac{1}{2}\big[A^{(1)},M\big]_{D}+\mathcal{O}(3)\nonumber\\
\hphantom{\wt{Z}}{} =A^{(2)}+\frac{1}{2}\operatorname{diag}\left(\sum_{c}\frac{M_{\bullet c}}{z_\bullet -z_c}M_{c\bullet}-M_{\bullet c}\frac{M_{c\bullet}}{z_c-z_\bullet}\right)_{\bullet=1}^{N}+\mathcal{O}(3)\nonumber \\
\hphantom{\wt{Z}}{} =
Z+\operatorname{diag}\left(\sum_{d}\frac{M_{\bullet d} M_{d\bullet}}{z_\bullet- z_d}\right)_{\bullet=1}^N.\label{diag2}
\end{gather}
We now show how to use the above diagonalization to second order \eqref{diag2} to construct the Harish-Chandra homomorphism; we anticipate that the reason why we expand up to order $2$ is that all the operators we consider are at most quadratic in the momenta $\p$ and hence translate to differential operators of order $2$. If we had to consider the Harish-Chandra homomorphism for operators of the higher order, we would have to perform the above diagonalization up to the corresponding order.

{\bf Space of radial functions.}
When considering the quantum version of the Kazhdan--Kostant--Sternberg reduction, the choice of the special value of the momentum $M$ \eqref{mom} is replaced by the requirement that the quantum operators act on specific representations. We start with a general discussion on equivariant functions.

Let $\mathbb V$ be a vector space carrying a representation $\gamma$ of $G = {\rm GL}_N$ and let $\Psi\colon {\rm Mat}_{N\times N}(\C)= \operatorname{Lie}(G) \to \mathbb V$ an {\it $\gamma$-equivariant} function in the sense that
\begin{gather*}
\Psi(Q) = \gamma\big(g^{-1}\big) \Psi\big(gQ g^{-1}\big),
\end{gather*}
where $g\in {\rm GL}_ N(\C)$ and $\gamma\colon {\rm GL}_N(\C
) \to \operatorname{Aut}(\mathbb V)$ is a representation. Let us denote
\begin{gather*}
H_\gamma:= \big\{\Psi\colon {\rm Mat}_{N\times N}(\C) \to V, \, \text{$\gamma$-equivariant}\big\}.
\end{gather*}
For \looseness=-1 simplicity, we denote by the same symbol $\gamma$ the representation of ${\rm SL}_n$, the corresponding representation of the Lie algebra $\mathfrak g= \mathfrak {sl}_N$ as well as its natural extension to the universal enveloping algebra~$U(\mathfrak g)$.
We recall that the zero-weight subspace of the ${\rm SL}_N$-representation~$\mathbb V$ is
\begin{gather*}
\mathbb V(0) := \bigcap_{H\in \mathfrak h} \operatorname{Ker} \gamma(H),
\end{gather*}
where $\mathfrak h$ is the Cartan subalgebra of $\mathfrak{sl}_N$ (traceless diagonal matrices).

\begin{Lemma}\label{lemmazero}
Let $\mathcal D\subset {\rm Mat}_{N\times N}(\C)$ consist of the subspace of diagonal matrices. Then any equivariant function $\Psi$ restricts to a function from $\mathcal D$ to $\mathbb V(0)$.
\end{Lemma}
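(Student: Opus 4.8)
The plan is to unwind the definition of equivariance and specialize the group element $g$ to the diagonal torus. Fix a diagonal matrix $Q = Z \in \mathcal D$ and let $g = {\rm e}^H$ with $H \in \mathfrak h$ an arbitrary traceless diagonal matrix. Since $Z$ is diagonal and $H$ commutes with it, we have $gZg^{-1} = Z$, so the equivariance relation $\Psi(Z) = \gamma\big(g^{-1}\big)\Psi\big(gZg^{-1}\big)$ collapses to $\Psi(Z) = \gamma\big({\rm e}^{-H}\big)\Psi(Z)$, i.e.\ $\gamma\big({\rm e}^{-H}\big)$ fixes the vector $\Psi(Z)$ for every $H \in \mathfrak h$.

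Next I would differentiate this identity at the identity of the torus. Replacing $H$ by $sH$ and differentiating the relation $\gamma\big({\rm e}^{-sH}\big)\Psi(Z) = \Psi(Z)$ with respect to $s$ at $s=0$ gives $\gamma(H)\Psi(Z) = 0$. As this holds for all $H$ in the Cartan subalgebra $\mathfrak h$, the vector $\Psi(Z)$ lies in $\bigcap_{H\in\mathfrak h}\operatorname{Ker}\gamma(H) = \mathbb V(0)$, which is exactly the claim. (Conversely, one does not even need the converse here, but it is immediate that the full torus action and its infinitesimal version agree since ${\rm e}^H$ for $H \in \mathfrak h$ exhausts a Zariski-dense subset of the diagonal torus, so the two formulations are equivalent.)

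Since this is really a one-line argument, I would present it essentially as above; there is no serious obstacle. The only point that deserves a word of care is the compatibility between the ${\rm SL}_N$-action and the ${\rm GL}_N$-equivariance: the definition of $H_\gamma$ uses $g \in {\rm GL}_N(\C)$ and a representation $\gamma$ of ${\rm GL}_N$, while $\mathbb V(0)$ is defined using the Cartan subalgebra of $\mathfrak{sl}_N$. This is harmless because the diagonal traceless matrices $H$ used above exponentiate into ${\rm GL}_N$ (indeed into ${\rm SL}_N$), so the equivariance hypothesis does apply to those $g = {\rm e}^H$; one simply restricts attention to that subgroup. I would state this explicitly to avoid any confusion about which group is acting, and otherwise the proof is complete.
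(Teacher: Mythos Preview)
Your proof is correct and follows essentially the same approach as the paper: specialize the equivariance relation to torus elements $g={\rm e}^{\epsilon H}$ with $H\in\mathfrak h$, use that diagonal matrices are fixed under this conjugation, and differentiate at $\epsilon=0$ to conclude $\gamma(H)\Psi(Z)=0$. Your additional remark on the ${\rm GL}_N$/${\rm SL}_N$ compatibility is a helpful clarification but does not alter the argument.
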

\begin{proof}
Consider a matrix $g ={\rm e}^{\epsilon H}$ with $H\in \mathfrak h$; then $\Psi(Q) = \gamma\big({\rm e}^{\epsilon H}\big) \Psi\big({\rm e}^{\epsilon H} Q{\rm e}^{-\epsilon H}\big)$. Restricting $Q=Z\in \mathcal D$ we have $\Psi(Z) = \gamma\big({\rm e}^{\epsilon H}\big) \Psi( Z)$. We now take the derivative with respect to $\epsilon$ at $\epsilon=0$ and we obtain $\gamma(H) \Psi(Z)=0$. Since~$H$ is arbitrary in $\mathfrak h$ it follows that $\Psi(Z) \in \mathbb V(0)$.
\end{proof}

Following \cite{Etin} the quantum Hamiltonian reduction that corresponds to the Kazhdan--Kostant--Sternberg orbit consists in taking a particular representation $\gamma$ of $\mathfrak {sl}_N$; the main feature of the $\mathfrak g$-module (which we denote by $\mathbb V_\kappa$) is that the zero weight space $\mathbb V_\kappa(0)$ is unidimensional. We denote with ${\mathbf c}$ a spanning element.
Specifically, $\mathbb V_\kappa$ consists of the space of functions of the form
\[
F(\xi_1,\dots, \xi_N)= \left(\prod_{j=1}^N \xi_j\right)^\kappa f\big( \vec \xi\,\big),
\]
where $f\big(\vec \xi\,\big)$ is a rational function with zero degree of homogeneity. The representation of the Lie algebra $\mathfrak {sl}_N$ is then the one obtained by restriction of the following $\mathfrak {gl}_N$ representation
\begin{gather}
\label{Vk}
\gamma(\mathbb E_{ab}) = \xi_a \frac {\pa}{\pa \xi_b}, \qquad a, b=1,\dots, N.
\end{gather}
It is easy then to see that $\mathbb V_\kappa(0) = \C\big\{\prod_{j=1}^N \xi_j^\kappa\big\}$.

Keeping this in mind we illustrate the type of computations needed to compute the extended Harish-Chandra homomorphism in the following example.

\begin{Example}
To illustrate the type of computations necessary, we consider the quantum radial reduction of the operator $\Tr\big(\q^k\p^2\big)$.
Using the form of the quantum operators $\p$, $\q$ we obtain
\[ 
\operatorname{Tr}\big(\q^k\p^2\big)\Psi(Q)
=
\bigg(\sum_{\rho,\sigma,\tau}\big(q^k\big)_{\rho \sigma}p_{\sigma \tau}p_{\tau \rho}\bigg)\Psi(Q)=
\bigg(\hbar^2 \sum_{\rho,\sigma,\tau}\big(q^k\big)_{\rho \sigma}\partial_{q_{\tau\sigma}}\partial_{q_{\rho\tau}}
\bigg)\Psi(Q).
\]
Since the function $\Psi$ is $\gamma$-equivariant and the operator is ${\mathrm{Ad}}$-invariant we can write $\Psi(Q) = \gamma(g)\Psi(Z)$ where $g$ is the matrix diagonalizing $Q$, and $Z$ is the diagonal matrix of its eigenvalues (this can be done on the set of diagonalizable matrices $Q$ whose complement of non-diagonalizable matrices is of zero measure and hence inessential to our considerations). We then consider matrices of the form $Q = Z+M$ with $Z$ diagonal and $M$ off-diagonal and its diagonalization up to order $2$ as in \eqref{Zad}. We then need to perform the derivatives and, at the end of the computation, restrict to the locus of diagonal matrices $Q=Z$. Using equation \eqref{diag2} and the matrices $A^{(1,2)}$ introduced in \eqref{A1} and \eqref{A2} we can continue the above computation by noticing that the terms involving the multiplication operator $\q$ can be directly evaluated at $Q= Z$ setting $M=0$:
\begin{gather}
\operatorname{Tr}\big(\q^k\p^2 \big)\Psi(Q)=
\le(\hbar^2 \sum_{\rho,\sigma,\tau}\delta_{\rho \sigma}{z}^k_\sigma \partial_{q_{\tau\sigma}}\partial_{q_{\rho\tau}}
\ri)\gamma\big({\rm e}^{-A^{(1)}-A^{(2)}}\big)\Psi\big(\tilde{Z}\big)=
\nn \\
\hphantom{\operatorname{Tr}\big(\q^k\p^2 \big)\Psi(Q)}{} =
\le(\hbar^2 \sum_{\sigma,\tau}z^k_\sigma \partial_{q_{\tau\sigma}}\partial_{q_{\sigma\tau}}
\ri)\gamma\big({\rm e}^{-A^{(1)}-A^{(2)}}\big)\Psi\big(\tilde{Z}\big).
\label{223}
\end{gather}
The second-order operator $\sum_{\rho,\sigma}z^k_\sigma \pa_{q_{\sigma\rho}} \pa_{q_{\rho\sigma}}$ written in terms of $Z$, $M$ becomes the operator $ \sum_{\rho} z_\rho^k\pa_{z_\rho}^2 + \sum_{\substack{\rho,\sigma \\ \rho\neq \sigma}}z^k_\rho \partial_{M_{\sigma \rho}}\partial_{M_{\rho \sigma}}$; the part involving the derivatives with respect to $z_\rho$ can be directly evaluated at $Q=Z$ while we postpone the evaluation of the part involving the derivatives in~$M_{\rho,\sigma}$:
\begin{gather}
\eqref{223}=
\hbar^2 \sum_{\sigma}z^k_\sigma \partial_{z_{\sigma}}^2 \Psi(Z)
+
\hbar^2 \underbrace{\sum_{\substack{\sigma ,\tau \\ \sigma\neq \tau}}z^k_\sigma \partial_{M_{\tau\sigma}}\partial_{M_{\sigma \tau}}\gamma\big({\rm e}^{-A^{(1)}-A^{(2)}}\big)\Psi\big(\tilde{Z}\big)}_{*},
\label{quan2}
\end{gather}
where $\wt Z = Z+\operatorname{diag}\Big(\sum_{d}\frac{M_{\bullet d} M_{d\bullet}}{z_\bullet- z_d}\Big)_{\bullet=1}^N
 $ as in~\eqref{diag2}.
Consider now the term marked with an asterisk: since $\wt Z-Z$ is a quadratic expression in the entries of~$M$, if we differentiate {\it} once~$\Psi$ or~$\gamma$ by $M_{\rho\sigma}$, by the chain rule there will be a multiplication by entire of~$M$ in the result. Thus, subsequent evaluation at $M=0$ will eliminate these terms. Therefore we need to consider the second-order operator acting on~$\gamma$ or~$\Psi$ separately. When acting on~$\Psi$ we have
\begin{gather*}
\sum_{\substack{\rho,\sigma \\ \rho\neq \sigma}}z^k_\sigma\partial_{M_{\sigma \rho}}\partial_{M_{\rho \sigma}} \Psi\big(\wt{Z}\big)\bigg|_{M=0}
=
\sum_{\substack{\rho,\sigma \\ \rho\neq \sigma}}z^k_\sigma\frac \pa{\partial{M_{\sigma \rho}}}\le( \frac {M_{\sigma \rho} \pa_{z_{\sigma} }\Psi}{z_\sigma - z_\rho} +\frac { M_{\sigma\rho} \pa_{z_\rho}\Psi}{z_\rho-z_\sigma}\ri) \bigg|_{M=0} \\
\hphantom{\sum_{\substack{\rho,\sigma \\ \rho\neq \sigma}}z^k_\sigma\partial_{M_{\sigma \rho}}\partial_{M_{\rho \sigma}} \Psi\big(\wt{Z}\big)\bigg|_{M=0}}{} =
\sum_{\substack{\rho,\sigma \\ \rho\neq \sigma}}z^k_\sigma\le(\frac {\pa_{z_\sigma} -\pa_{z_\rho}}{z_\sigma-z_\rho}\ri)\Psi(Z).
\end{gather*}

For the computation of the second term involving the representation $\gamma$, we note that
\begin{gather*}
\partial_{M_{\sigma \tau}}A^{(1)} = \sum_{\substack{\sigma ,\tau \\ \sigma\neq \tau}}\frac{E_{\sigma \tau}}{z_\sigma-z_\tau} , \quad \partial_{M_{\tau\sigma}}\partial_{M_{\sigma \tau}}A^{(1)}=0,
 \\
\partial_{M_{\sigma \tau}}A^{(2)}= \sum_{\substack{\sigma ,\tau, \nu \\ \nu\neq \sigma\neq \tau}} \frac{M_{\tau \nu}E_{\sigma\nu}}{(z_\sigma-z_\tau)(z_\sigma-z_\nu)}+\sum_{\substack{\sigma ,\tau, \mu \\ \mu\neq \sigma\neq\tau}}\frac{M_{\mu\sigma}E_{\mu\tau}}{(z_\mu-z_\sigma)(z_\mu-z_\tau)} , \qquad \partial_{M_{\tau\sigma}}\partial_{M_{\sigma \tau}}A^{(2)}=0.
\end{gather*}
Therefore the action of these differential operators on the group element ${\rm e}^{-A^{(1)}-A^{(2)}}$ gives (retaining the terms up to order $2$ in $M$ in the expansion of the exponential, since all higher order terms will give zero contribution when evaluated at $M=0$)
\begin{gather*}
\partial_{M_{\sigma \tau}}{\rm e}^{-A^{(1)}-A^{(2)}} = -\partial_{M_{\sigma \tau}}A^{(1)}-\partial_{M_{\sigma \tau}}A^{(2)}+\frac{\big[\partial_{M_{\sigma \tau}}A^{(1)},A^{(1)}\big]_+}{2}, \\
\partial_{M_{\tau\sigma}}\partial_{M_{\sigma \tau}}{\rm e}^{-A^{(1)}-A^{(2)}}\bigg |_{M=0}=-\frac{[\mathbb E_{\tau\sigma},\mathbb E_{\sigma \tau}]_+}{2(z_\sigma-z_\tau)^2}.
\end{gather*}
Here $\mathbb E_{\tau\sigma}$ denote the elementary matrices. Note also that $\mathbb E_{\tau\sigma}\mathbb E_{\sigma\tau} = \mathbb E_{\tau\tau}$ are diagonal matrices and $[\mathbb E_{\tau\sigma}, \mathbb E_{\sigma \tau}]_+ = \mathbb E_{\tau\tau} + \mathbb E_{\sigma\sigma}$. Hence, the equation \eqref{quan2} yields
\begin{gather*}
\operatorname{Tr}\big(\q^k\p^2\big)\Psi(Q)=
\hbar^2 \sum_{\sigma}z_\sigma^k \partial_{z_{\sigma}}^2 \Psi(Z)
 \\
\qquad{} +
\hbar^2 \sum_{\substack{\sigma ,\tau \\ \sigma\neq \tau}}z_\sigma^k \left(-\frac{\gamma([\mathbb E_{\tau \sigma},\mathbb E_{\sigma\tau}]_+)}{2(z_\sigma-z_\tau)^2}\Psi\big(\tilde{Z}\big) +
 \left(\frac{\partial_{z_\sigma}-\partial_{z_\tau}}{z_\sigma-z_\tau}\right)\gamma\big({\rm e}^{-A^{(1)}-A^{(2)}}\big)\Psi\big(\tilde{Z}\big)\right)\bigg|_{M=0}.
\end{gather*}
To complete the computation, we recall that under the assumption for the representation space~$\mathbb V_\kappa$ (see~\eqref{Vk}) we easily see that $\gamma( \mathbb E_{\sigma\tau} \mathbb E_{\tau\sigma}) = \kappa(\kappa+1) {\rm Id}_{\mathbb V_\kappa(0)}$. Recall also (Lemma~\ref{lemmazero}) that $\Psi$ evaluated on diagonal matrices takes values in the zero weight space $\mathbb V_\kappa(0)$.
Therefore we conclude that $\gamma([\mathbb E_{\tau\sigma},\mathbb E_{\sigma\tau}]_+)$ reduces simply to the multiplication by $2\kappa(\kappa+1)$.

This leads finally to the result
\begin{gather*}
\operatorname{Tr}\big(\q^k\p^2\big)\Psi(Q)= \bigg(\hbar^2 \!\sum_{\sigma}z^k_\sigma \partial_{z_{\sigma}}^2-\hbar^2 \kappa (\kappa+1)\sum_{\substack{\sigma ,\tau \\ \sigma\neq \tau}}\frac{z^k_\sigma}{(z_\sigma-z_\tau)^2}
+\hbar^2 \!\sum_{\substack{\sigma ,\tau \\ \sigma\neq \tau}}
 \frac{z_\sigma^k (\partial_{z_\sigma}- \partial_{z_\tau})}{z_\sigma-z_\tau}\bigg)\Psi(Z).
\end{gather*}
To make the last term more symmetric we can add and subtract $z_{\tau}^k\partial_{z_\tau}$ and obtain
\begin{gather*}
\operatorname{Tr}\big(\q^k\p^2\big)\Psi(Q)=
\bigg(\hbar^2 \!\sum_{\sigma}z^k_\sigma \partial_{z_{\sigma}}^2-
\hbar^2 \kappa (\kappa+1)
 \sum_{\substack{\sigma ,\tau \\ \sigma\neq \tau}} \frac{z^k_\sigma}{(z_\sigma-z_\tau)^2}
+\hbar^2\! \sum_{\substack{\sigma ,\tau \\ \sigma\neq \tau}}
 \frac{z_\sigma^k \partial_{z_\sigma}-z_\tau^k \partial_{z_\tau}} {z_\sigma-z_\tau}\bigg)\Psi(Z)
\\
\hphantom{\operatorname{Tr}\big(\q^k\p^2\big)\Psi(Q)=}{} -\hbar^2 \sum_{j=0}^{k-1} \sum_{\sigma} z_\sigma^j \sum_{\tau} z_\tau^{k-j-1} \partial_{z_\tau}\Psi(Z) +\hbar^2 k \sum_{\tau} z_\tau^{k-1} \pa_{z_\tau}\Psi(Z).
\end{gather*}
\end{Example}

Throughout the following section, we use this method to obtain the quantized Calogero--Painlev\'e {II}--{VI} Hamiltonian operators.

\subsubsection{Quantization of Calogero--Painlev\'e Hamiltonians I--VI}

We apply the quantum Hamiltonian KKS reduction explained in the previous sections to the list of Hamiltonians~\eqref{revisedCP}.
Moreover note that unlike the classic case, in the quantum case we have non-commutative operator-valued matrices $\p$ and $\q$ so that $\Tr(\p\q)$ is not equal to $\Tr(\q\p)$. For example, we have the following
\begin{gather*}
\operatorname{Tr}(\p\q)= \operatorname{Tr}(\q\p)+\hbar N^2, \nn \\
\operatorname{Tr}(\p\q\p)= \operatorname{Tr}\big(\q\p^2\big)+\hbar N \operatorname{Tr}(\p), \nn\\
\operatorname{Tr}(\q\p\q)= \operatorname{Tr}\big(\q^2 \p\big)+ \hbar N \operatorname{Tr}(\q),\nn \\
\operatorname{Tr}\big(\p\q^2\big)= \operatorname{Tr}\big(\q^2 \p\big)+2\hbar N \operatorname{Tr}(\q),\nn \\
\operatorname{Tr}\big(\p^2 \q^2\big) = \operatorname{Tr}\big(\q^2 \p^2\big)+2\hbar N \operatorname{Tr}(\q\p)+2\hbar \operatorname{Tr}(\q)\operatorname{Tr}(\p)+\hbar^2 N\big(1+N^2\big) .
\end{gather*}

We start from the Hamiltonian operator corresponding to Calogero--Painlev\'e~II (more complicated case than Calogero--Painlev\'e~I) and we apply it to the $\gamma$-equivariant wave func\-tion~$\Psi(Q)$, and then we apply the quantization \eqref{quan} to the result:
\begin{gather}
\tilde{H}_{\rm II}\Psi(Q)=
\operatorname{Tr}\le(\frac{\p^2}{2}-\frac{1}{2}\le(\q^2+\frac{t}{2}\ri)^2-\theta \q\ri)\Psi(Q)
\nonumber\\
\hphantom{\tilde{H}_{\rm II}\Psi(Q)}{} =
\le(\frac{\hbar^2}{2}\sum_{\rho,\sigma}\partial_{q_{\sigma \rho}}\partial_{q_{\rho \sigma}}-\frac{1}{2}\sum_{\rho,\sigma}\le(\big(\delta_{\sigma\rho}^2q_{\rho\sigma}^2\big)+\frac{t}{2}\ri)^2-\theta \delta_{\sigma\rho}q_{\rho\sigma}
\ri)\Psi(Q).
\label{227}
\end{gather}
Following the same logic used in the illustrative example of the operator $\Tr(\p\q\p)$ we can continue the computation
\begin{gather}
\eqref{227}=
\frac{\hbar^2}{2}\!\sum_{\rho,\sigma}\partial_{q_{\sigma \rho}}\partial_{q_{\rho \sigma}} \gamma\big({\rm e}^{-A^{(1)}-A^{(2)}}\big)\Psi\big(\wt{Z}\big)-\frac{1}{2}\sum_{\rho}\!\left(z_{\rho}^2+\frac{t}{2}\right)^2\!\Psi(Z)-\theta \sum_{\rho}z_{\rho}\Psi(Z),\!\!\!
\label{228}
\end{gather}
where $\wt Z = Z+\operatorname{diag}\Big(\sum_{d}\frac{M_{\bullet d} M_{d\bullet}}{z_\bullet- z_d}\Big)_{\bullet=1}^N
 $ as in~\eqref{diag2}.
Note that the terms involving only the multiplication operator $\q$ can be directly evaluated at $Q= Z$ setting $M=0$. The second-order operator $\sum_{\rho,\sigma} \pa_{q_{\sigma\rho}} \pa_{q_{\rho\sigma}}$ in terms of~$Z$,~$M$ becomes the operator $\sum_{\rho} \pa_{z_\rho}^2 + \sum_{\substack{\rho,\sigma \\ \rho\neq \sigma}}\partial_{M_{\sigma \rho}}\partial_{M_{\rho \sigma}}$; once again, the part involving the derivatives with respect to $z_\rho$ can be directly evaluated at $Q=Z$ while we postpone the evaluation of the part involving the derivatives in $M_{\rho,\sigma}$:
\begin{gather*}
\eqref{228}=
\frac{\hbar^2}{2}\sum_{\rho}\partial_{z_{\rho}}^2\Psi(Z)+\frac{\hbar^2}{2}
\underbrace{
\sum_{\substack{\rho,\sigma \\ \rho\neq \sigma}}\partial_{M_{\sigma \rho}}\partial_{M_{\rho \sigma}}\big(\gamma\big({\rm e}^{-A^{(1)}-A^{(2)}}\big)\Psi\big(\wt{Z}\big)\big)
}_{*} \\
\hphantom{\eqref{228}=}{}
-\frac{1}{2}\sum_{\rho}\left(z_{\rho}^2+\frac{t}{2}\right)^2\Psi(Z) -\theta \sum_{\rho}z_{\rho}\Psi(Z).
\end{gather*}

The term indicated by the asterisk is dealt with in complete analogy to the similarly marked term in~\eqref{quan2}. We thus obtain
\begin{gather*}
\frac{\hbar^2}{2}\sum_{\rho}\partial_{z_{\rho}}^2\Psi(Z)+\frac{\hbar^2}{2}\sum_{\substack{\rho,\sigma \\
 \rho\neq \sigma}}\left(-\frac{1}{2}\gamma\le(\frac{[\mathbb E_{\rho \sigma},\mathbb E_{\sigma \rho}]_+}{(z_\sigma-z_\rho)^2}\ri)
 +
 \gamma\big({\rm e}^{-A^{(1)}-A^{(2)}}\big)\le(\frac{\partial_{z_\rho}-\partial_{z_\sigma}}{z_\rho-z_\sigma}\ri)\right)\Psi\big(\wt{Z}\big) \\
\qquad{} -\frac{1}{2}\sum_{\rho}\left(z_{\rho}^2+\frac{t}{2}\right)^2\Psi(Z)-\theta \sum_{\rho}z_{\rho}\Psi(Z).
\end{gather*}
 Hence, putting these all together, we obtain
\begin{gather*} 
\tilde{H}_{\rm II}\Psi(Q)
=
\frac{\hbar^2}{2}\sum_{\rho}\partial_{z_{\rho}}^2\Psi(Z)-\frac{\hbar^2 \kappa(\kappa+1)}{2}\sum_{\substack{\rho,\sigma \\ \rho\neq \sigma}} \frac{1}{(z_\rho-z_\sigma)^2}\Psi(Z)
+ \frac{\hbar^2}{2}\sum_{\substack{\rho,\sigma \\ \rho\neq \sigma}}\frac{\partial_{z_\rho}-\partial_{z_\sigma}}{z_\rho-z_\sigma}\Psi(Z) \\
\hphantom{\tilde{H}_{\rm II}\Psi(Q)=}{}
-\frac{1}{2}\sum_{\rho}\left(z_{\rho}^2+\frac{t}{2}\right)^2\Psi(Z)-\theta\sum_{\rho}z_{\rho}\Psi(Z).
\end{gather*}
The equation takes a more convenient form if we apply to the wave function a gauge transformation of the form
\begin{equation*}
\Psi(Z)=\exp\le[-\frac{1}{\hbar}\sum_{\alpha}\le(\frac{z_{\alpha}^3}{3}+\frac{t}{2}z_{\alpha}\ri)\ri]\Phi(Z).
\end{equation*}
As a result, the Schr\"odinger equation $\hbar \partial_t \Psi(Z)=\wt H_{\rm II}\Psi(Z)$ is transformed into the one with the new Hamiltonian
\begin{gather}
\hat{H}_{\rm II}=\frac{\hbar^2}{2}\sum_{\substack{\rho,\sigma \\ \rho\neq \sigma}}\frac{\partial_{z_\sigma}-\partial_{z_\rho}}{z_\sigma-z_\rho}-\frac{\hbar^2 \kappa(\kappa+1)}{2}\sum_{\substack{\rho,\sigma \\ \rho\neq \sigma}}\frac{1}{\le(z_\sigma-z_\rho\ri)^2}
+\frac{\hbar^2}{2}\sum_\rho \partial_{z_\rho}^2-\hbar\sum_\rho\le(z_\rho^2+\frac{t}{2}\ri)\partial_{z_\rho} \nonumber\\
\hphantom{\hat{H}_{\rm II}=}{}
+\le(\frac{1}{2}-\theta-\hbar N\ri)\sum_\rho z_\rho.\label{H2_2}
\end{gather}

As a result of a similar computation for Calogero--Painlev\'e I Hamiltonian operator in~\eqref{IsoHamil}, and each of the Hamiltonian operators in the system \eqref{revisedCP} (no gauge transformation required), we obtain the following Hamiltonian system for the quantized Calogero--Painlev\'e system:
\begin{gather} \label{H1}
\tilde{H}_{\rm I} = \frac{\hbar^2}{2}\sum_{\substack{\rho,\sigma \\ \rho\neq \sigma}}\frac{\partial_{z_\sigma}-\partial_{z_\rho}}{z_\sigma-z_\rho}-\frac{\hbar^2 \kappa(\kappa+1)}{2}\sum_{\substack{\rho,\sigma \\ \rho\neq \sigma}}\frac{1}{\le(z_\sigma-z_\rho\ri)^2}
+\frac{\hbar^2}{2}\sum_\rho \partial_{z_\rho}^2 -\sum_{\rho} \le(\frac{z_{\rho}^3}{2}+\frac{tz_{\rho}}{4}\ri),\!\!\!
\\
t\tilde{H}_{\rm III} =
\hbar^2\sum_{\substack{\rho,\sigma \\ \rho\neq \sigma}}\frac{z_\sigma^2\partial_{z_\sigma}-z_\rho^2\partial_{z_\rho}}{z_\sigma-z_\rho}+\sum_\rho \big(\hbar^2 z_\rho^2\partial_{z_\rho}^2-\hbar\big(z_\rho^2-(2\hbar-\theta_0+\theta_1)z_\rho-t\big)\partial_{z_\rho} \nn \\
\hphantom{t\tilde{H}_{\rm III} =}{}-(\hbar N+\theta_1) z_\rho\big)
-\frac{\hbar^2\kappa(\kappa+1)}{2} \sum_{\substack{\rho,\sigma \\ \rho\neq \sigma}}\frac{z_\rho^2+z_\sigma^2}{(z_\sigma-z_\rho)^2}+\frac{\hbar^2N\big(1+N^2\big)}{2},\label{H3}
\\
\tilde{H}_{\rm IV}
=
\hbar^2\sum_{\substack{\rho,\sigma \\ \rho\neq \sigma}}\frac{z_\sigma\partial_{z_\sigma}-z_\rho\partial_{z_\rho}}{z_\sigma-z_\rho}+\sum_\rho \big(\hbar^2z_\rho\partial_{z_\rho}^2-\hbar\big(z_\rho^2+tz_\rho-\theta_0-\hbar\big)\partial_{z_\rho}\big) \nn \\
\hphantom{\tilde{H}_{\rm IV}}{} -\frac{\hbar^2\kappa(\kappa+1)}{2} \sum_{\substack{\rho,\sigma \\ \rho\neq \sigma}}\frac{z_\rho+z_\sigma}{(z_\sigma-z_\rho)^2} -(\hbar N+\theta_0+\theta_1)\sum_\rho z_\rho-t\hbar N^2, \label{H4}
\\
t\tilde{H}_{\rm V}=
\hbar^2\sum_{\substack{\rho,\sigma \\ \rho\neq \sigma}}\frac{z_\sigma(z_\sigma-1)\partial_{z_\sigma}-z_\rho(z_\rho-1)\partial_{z_\rho}}{z_\sigma-z_\rho}+\hbar^2\sum_\rho z_\rho(z_\rho-1)\partial_{z_\rho}^2 \nn \\
\hphantom{t\tilde{H}_{\rm V}=}{}
-\frac{\hbar^2 \kappa(\kappa+1)}{2} \sum_{\substack{\rho,\sigma \\ \rho\neq \sigma}}\frac{z_\rho(z_\rho-1)+z_\sigma(z_\sigma-1)}{(z_\sigma-z_\rho)^2}\nonumber\\
\hphantom{t\tilde{H}_{\rm V}=}{}
+\hbar \sum_\rho\big(t z_\rho^2+\big(2\hbar+\theta_0-\theta_2-t\big) z_\rho+\theta_2-\hbar \big)\partial_{z_\rho}
+ t(\hbar N+\theta_0+\theta_1)\sum_\rho z_\rho\nonumber\\
\hphantom{t\tilde{H}_{\rm V}=}{}
+\le((\theta_0-\theta_2-t)N^2\hbar+\frac{N\hbar^2\big(1+N^2\big)}{2}\ri),\label{H5}
\\
t(t-1)\tilde{H}_{\rm VI}=
\hbar^2\sum_{\substack{\rho,\sigma \\ \rho\neq \sigma}}\frac{z_\sigma(z_\sigma-1)(z_\sigma-t)\partial_{z_\sigma}-z_\rho(z_\rho-1)(z_\rho-t)\partial_{z_\rho}}{z_\sigma-z_\rho}\nonumber\\
\hphantom{t(t-1)\tilde{H}_{\rm VI}=}{}
+\hbar^2\sum_\rho z_\rho(z_\rho-1)(z_\rho-t)\partial_{z_\rho}^2 \nn
\\
\hphantom{t(t-1)\tilde{H}_{\rm VI}=}{}
+\hbar\sum_\rho \big((3\hbar-\theta)z_\rho^2
+ \big(-\hbar(1+t)+t(\theta_0+\theta_1)+\theta_0+\theta_t\big)z_\rho+t(\hbar -\theta_0)\big)\partial_{z_\rho} \nn\\
\hphantom{t(t-1)\tilde{H}_{\rm VI}=}{} -\frac{\hbar^2 \kappa(\kappa+1)}{2} \sum_{\substack{\rho,\sigma \\ \rho\neq \sigma}}\frac{z_\rho(z_\rho-1)(z_\rho-t)+z_\sigma(z_\sigma-1)(z_\sigma-t)}{(z_\sigma-z_\rho)^2} \nn
\\
\hphantom{t(t-1)\tilde{H}_{\rm VI}=}{}
+\left(N^2\hbar^2-\theta N\hbar-\frac{1}{4}\big(k^2-\theta^2\big)+(N-1)\kappa(\kappa+1)\hbar^2\right)\sum_\rho z_\rho \nn \\
\hphantom{t(t-1)\tilde{H}_{\rm VI}=}{}
 -\frac{N^3\hbar^2}{2}+t\hbar N^2(\theta_0+\theta_1)+\hbar N^2(\theta_0+\theta_t)-\frac{\hbar^2 N(N-1)\kappa(\kappa+1)}{2}.\label{H6}
\end{gather}

\setcounter{footnote}{1}
\section{Generalization of the quantum Painlev\'e equations}\label{sec3}
Nagoya in \cite{Nag}, introduced the Hamiltonian operators corresponding to the quantum Painlev\'e equations in the case of a single particle, satisfying the Schr\"odinger equation
\begin{equation}
\hbar \frac{\partial}{\partial t}\Phi (z,t)=H_J \left(z, \hbar \frac{\partial}{\partial z},t\right)\Phi(z,t), \qquad J={\rm II},{\rm III},{\rm IV},{\rm V},{\rm VI},\label{Schreq}
\end{equation}
where Hamiltonian operators $H_J$ are obtained from the polynomial Hamiltonian operators of the Painlev\'e equations by substituting the operators~$z$, $\hbar \frac{\partial}{\partial z}$ into the canonical coordinates. These operators are defined as\footnotetext{The coefficient of $z$ in \cite{Nag} is shifted by the parameter $b$, however, based on our generalization, the generated coefficient of $z$ is only $a$.}\addtocounter{footnote}{-1}
\begin{gather}
H_{\rm II} =
\frac{1}{2}(\hbar \partial_{z})^2-\left(z^2+\frac{t}{2}\right)\hbar\partial_{z}+a z, \nn \\
t H_{\rm III} = z^2(\hbar \partial_z)^2- (z^2+bz+t)\hbar \partial_z +a z, \footnotemark \nn\\
H_{\rm IV} =
z(\hbar \partial_{z})^2-\big(z^2+tz+b\big)\hbar\partial_{z}+a (z+t), \nn\\
tH_{\rm V}
=
z(z-1)(\hbar\partial_{z})^2+\big(tz^2-(b+c+t)z+b\big)\hbar\partial_{z}
+a(b+c-a+\hbar+t-tz), \nn\\
t(t-1)H_{\rm VI}=
z(z-1)(z-t)(\hbar\partial_{z})^2-\big((a+b)(z-1)(z-t)
+cz(z-t) +dz(z-1)\big)\hbar\partial_{z}\nonumber\\
\hphantom{t(t-1)H_{\rm VI}=}{} +(b+c+d+\hbar)a(z-t).\label{NagHamil}
\end{gather}
He showed that the quantum Hamiltonians \eqref{NagHamil} admit special solutions in integral form when the parameters take certain specific values: in fact, the wave function $\Phi$ in~\eqref{Schreq} can be taken in the following form
\begin{equation}
\Phi_m^J(z,t)=\int_{\Gamma} \prod_{1\leq i<j\leq m}(u_i-u_j)^{2\hbar} \prod_{i}(z-u_i)\Theta_J(u_i,t)\d u_i.
\label{phi1p}
\end{equation}
Here $\Gamma = \prod_j \gamma_j$ is a cartesian product of ``admissible'' contours $u_j\in \gamma_j$ that can be chosen on the Riemann surface of the ``master'' function $\Theta_{J}$. By this we mean that
\begin{itemize}\itemsep=0pt
\item $\Theta_J(u_j)$ is single valued along $\gamma_j$;
\item $\int_{\gamma_j}u_j^k \Theta_J(u_j)\d u_j$ is a convergent integral and not identically zero (as an expression in $k\in \mathbb N$);
\item The contours are pairwise non-intersecting if $\hbar \not\in \frac 1 2\mathbb N$.
\end{itemize}
For example for $J={\rm II}$ the contours can be taken as contours starting from infinity along one of the three directions $\arg(u_j)=\frac {2\pi}3 k$, $k=0,1,2$ and ending at infinity along any of the remaining ones. We could also take a circle, but then the Cauchy theorem would imply that the integral $\int u^\ell \Theta_{\rm II}(u)\d u$ is zero.
The requirement that the different $\gamma_j$'s do not intersect is due to the fact that if $\hbar\not\in \frac 1 2 \mathbb N$ then the power of the Vandermonde term in the integrand~\eqref{phi1p} yields a~non-single valued function.

The master functions $\Theta_J(u_i,t)$ ($J={\rm II},{\rm III},{\rm IV},{\rm V},{\rm VI}$) are the weight functions defined below:
\begin{gather}
\Theta_{\rm II} = \exp \left(-\left( u_it+\frac{2}{3}u_i^3\right)\right), \nn \\
\Theta_{\rm III} =
u_i^{-b-1} \exp \left(\frac{t}{u_i}-u_i\right), \nn \\
\Theta_{\rm IV} =
u_i^{-b-1} \exp \left(-\left(u_it+\frac{u_i^2}{2}\right)\right), \nn
\\
\Theta_{\rm V} =
u_i^{-b-1}(1-u_i)^{-c-1} \exp (u_it), \nn \\
\Theta_{\rm VI} =
u_i^{-a-b-1}(1-u_i)^{-c-1}(t-u_i)^{-d}.\label{masters}
\end{gather}

With the positions \eqref{masters} and formula \eqref{phi1p} the functions $\Phi_m^J$ satisfy the Schr\"odinger equations~\eqref{Schreq} provided that the parameters $a$, $b$, $c$, $d$ satisfy the following condition
\begin{equation*}
\begin{cases}
a=m\hbar \quad \text{and} \quad b+c+d=(m-1)\hbar, & J={\rm VI},\\
a=m\hbar, & J={\rm II},{\rm III},{\rm IV},{\rm V}.
\end{cases}
\end{equation*}
We want to generalize this result of Nagoya's to our multi-particle quantum Hamiltonians \eqref{H2_2}, \eqref{H3}--\eqref{H6} by providing a multi-particle extension of the integral formul\ae~\eqref{phi1p}.

\subsection[Integral representations for the quantum Painlev\'e--Calogero Schr\"odinger wave functions]{Integral representations for the quantum Painlev\'e--Calogero Schr\"odinger\\ wave functions}

We now explain the general approach behind the extension of Nagoya's formul\ae. We start from an Ansatz of the form
\begin{equation}
\Phi(\vec z;t)=\int_{\Gamma} \prod_{1\leq i<j\leq m}(u_i-u_j)^{2\hbar} \prod_{\rho=1}^N\prod_{i=1}^m(z_{\rho}-u_i)\Theta_J(u_i) \d u_i.
\label{phiNp}
\end{equation}
Observe that, similarly to Nagoya's result, these are polynomials in the $z_\rho$'s of total degree~$Nm$ and of degree $m$ in each of the variables $z_\rho$.
With the Ansatz~\eqref{phiNp} in place, we verify by a direct calculation on a case-by-case basis, that they satisfy a multi-variate generalization of~\eqref{Schreq} and identify the corresponding Hamiltonian operator.
The result of these computations, whose details are reported in the appendix section, are summarized in the following theorem (for readability, the range of greek indices is assumed to be $1,\dots, N$ without explicit mention).
\begin{Theorem}
The equation \eqref{phiNp}, with $\Theta_J$ given by \eqref{masters}, is a solution to the multi-variable version of Schr\"odinger equation \eqref{Schreq}, where the Hamiltonians $H_J$, $J={\rm II}, {\rm III}, {\rm IV}, {\rm V}, {\rm VI}$ are given by the following operators:
\begin{gather}
H_{\rm II}= \frac{\hbar}{2}\underset{\rho\neq\sigma}{\sum_{\rho,\sigma}} \frac{\partial_{z_{\rho}}-\partial_{z_{\sigma}}}{z_{\rho}-z_{\sigma}} +\frac{\hbar^2}{2} \sum_{\rho}\partial_{z_{\rho}}^2 -\hbar\sum_{\rho} \left(z_{\rho}^2+\frac{t}{2}\right)\partial_{z_{\rho}} +m\hbar \sum_{\rho} z_{\rho},
\label{CP2}
\\
tH_{\rm III}= \hbar\underset{\rho\neq\sigma}{\sum_{\rho,\sigma}}\frac{z_{\rho}^2\partial_{z_{\rho}}-z_{\sigma}^2\partial_{z_{\sigma}}}{z_{\rho}-z_{\sigma}}+ \sum_{\rho}\big(\hbar^2 z_{\rho}^2\partial_{z_{\rho}}^2-\hbar \big(z_{\rho}^2+(b+N-1)z_{\rho}+t\big)\partial_{z_{\rho}}+m\hbar z_{\rho}\big),
\label{CP3}
\\
H_{\rm IV}= \hbar\underset{\rho\neq\sigma}{\sum_{\rho,\sigma}}\frac{z_{\rho}\partial_{z_{\rho}}-z_{\sigma}\partial_{z_{\sigma}}}{z_{\rho}-z_{\sigma}}+ \sum_{\rho}\big(\hbar^2 z_{\rho}\partial_{z_{\rho}}^2-\hbar \big(z_{\rho}^2+tz_{\rho}+b\big)\partial_{z_{\rho}}+m\hbar z_{\rho}\big) + \hbar Nmt,
\label{CP4}
\\
tH_{\rm V}=
 \hbar\underset{\rho\neq\sigma}{\sum_{\rho,\sigma}}\frac{z_{\rho}(z_{\rho}-1)\partial_{z_{\rho}}-z_{\sigma}(z_{\sigma}-1)\partial_{z_{\sigma}}}{z_{\rho}-z_{\sigma}}
+
\hbar Nm\big(b+c+t -\hbar(m-1)- N+1\big)
 \nn\\
\hphantom{tH_{\rm V}=}{}
+\sum_{\rho}\big(\hbar^2 z_{\rho}(z_{\rho}-1)\partial_{z_{\rho}}^2+\hbar \big(tz_{\rho}^2-(b+c+t)z_{\rho}+b\big)\partial_{z_{\rho}}
- m\hbar tz_{\rho}\big),
\label{CP5}
\\
t(t-1)H_{\rm VI}=
 \hbar\underset{\rho\neq\sigma}{\sum_{\rho,\sigma}}\frac{z_{\rho}(z_{\rho}-1)(z_{\rho}-t) \partial_{z_{\rho}}-z_{\sigma}(z_{\sigma}-1)(z_{\sigma}-t)\partial_{z_{\sigma}}}{z_{\rho}-z_{\sigma}}\nonumber\\
 \hphantom{t(t-1)H_{\rm VI}=}{}
+ \sum_{\rho}\hbar^2 z_{\rho}(z_{\rho}-1)(z_{\rho}-t)\partial_{z_{\rho}}^2
\nn\\
\hphantom{t(t-1)H_{\rm VI}=}{} -\hbar \sum_{\rho}\left((a+b)(z_{\rho}-1)(z_{\rho}-t)+cz_{\rho}(z_{\rho}-t)+(d+N-1)z_{\rho}(z_{\rho}-1)\right)\partial_{z_{\rho}}
\nn\\
\hphantom{t(t-1)H_{\rm VI}=}{} - \hbar m(N-1-\hbar m)\sum_{\rho}z_{\rho}
- \hbar m N(\hbar m+1-N)t.
\label{CP6}
\end{gather}
\end{Theorem}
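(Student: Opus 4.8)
The plan is to prove each identity $\hbar\,\partial_t\Phi=H_J\Phi$, $J=\mathrm{II},\dots,\mathrm{VI}$, by the device underlying Nagoya's single-particle argument: differentiate \eqref{phiNp} under the integral sign and show that $\bigl(\hbar\,\partial_t-H_J\bigr)$ applied to the integrand is a total $u$-derivative, so that the integral over $\Gamma$ vanishes by the admissibility of the contours $\gamma_i$. Write $\Phi(\vec z;t)=\int_\Gamma\Omega_J\,\d u_1\cdots\d u_m$ with integrand $\Omega_J=\prod_{i<j}(u_i-u_j)^{2\hbar}\,P(\vec z,\vec u)\,\prod_{i=1}^m\Theta_J(u_i)$ and $P(\vec z,\vec u):=\prod_{\rho}\prod_{i}(z_\rho-u_i)$. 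The convergence, single-valuedness, and (for $\hbar\notin\frac12\mathbb{N}$) non-intersection conditions on the $\gamma_i$ recalled above are exactly what legitimizes differentiating under the integral and, at the end, dropping the boundary terms.

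First I would record the logarithmic-derivative identities $\partial_{z_\rho}\log P=\sum_i(z_\rho-u_i)^{-1}$, $\partial_{z_\rho}^2\log P=-\sum_i(z_\rho-u_i)^{-2}$, $\partial_{u_i}\log P=-\sum_\rho(z_\rho-u_i)^{-1}$, $\partial_{u_i}\log\prod_{k<l}(u_k-u_l)^{2\hbar}=2\hbar\sum_{l\neq i}(u_i-u_l)^{-1}$, together with the elementary rational functions $\Theta_J'(u_i)/\Theta_J(u_i)$ and $\partial_t\Theta_J(u_i)/\Theta_J(u_i)$ read off from \eqref{masters}. Since $H_J$ differentiates only in $\vec z$ while all the $t$-dependence sits in $\prod_i\Theta_J(u_i)$, substituting the $\partial_{z_\rho}$-identities turns $H_J\Omega_J$ into $\Omega_J$ times a rational function of $\vec z,\vec u,t$; the apparent poles on $z_\rho=z_\sigma$ produced by the two-body term collapse through $\frac1{z_\rho-z_\sigma}\bigl(\frac1{z_\rho-u_i}-\frac1{z_\sigma-u_i}\bigr)=-\frac1{(z_\rho-u_i)(z_\sigma-u_i)}$, so that $(\hbar\partial_t-H_J)\Omega_J/\Omega_J$ is symmetric in $\vec z$ with poles only along $z_\rho=u_i$.

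The core step is then to exhibit an explicit family of rational functions $R_i^{(J)}(\vec z,\vec u,t)$, of low degree in each $z_\rho$, such that $\bigl(\hbar\,\partial_t-H_J\bigr)\Omega_J=\sum_{i=1}^m\partial_{u_i}\bigl(R_i^{(J)}\Omega_J\bigr)$. Expanding the right side as $\Omega_J\bigl(\partial_{u_i}R_i^{(J)}+R_i^{(J)}\partial_{u_i}\log\Omega_J\bigr)$ with $\partial_{u_i}\log\Omega_J=2\hbar\sum_{l\neq i}(u_i-u_l)^{-1}-\sum_\rho(z_\rho-u_i)^{-1}+\Theta_J'(u_i)/\Theta_J(u_i)$ and clearing $\Omega_J$ reduces the claim to a polynomial identity in $\vec u$ with coefficients rational in $\vec z$ and $t$; one pins down $R_i^{(J)}$ by matching the terms carrying the master-function factors $\Theta_J'/\Theta_J$ and $\partial_t\Theta_J/\Theta_J$ --- which also forces $a=m\hbar$ --- and then checks that the remaining terms cancel, the residual $\vec z$-independent term in the case $J=\mathrm{VI}$ being what imposes $b+c+d=(m-1)\hbar$, consistent with the single-particle constraints recalled above. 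Because $\Theta_J$ and the coefficient functions of $H_J$ change from one case to the next, this is done separately for $J=\mathrm{II},\dots,\mathrm{VI}$, with the details relegated to the appendix. Once the exact-form identity is in hand, integrating over $\Gamma$ gives $\hbar\,\partial_t\Phi-H_J\Phi=\sum_i\int_\Gamma\partial_{u_i}\bigl(R_i^{(J)}\Omega_J\bigr)=0$, since on each $\gamma_i$ the function $R_i^{(J)}\Omega_J$ is single-valued and its endpoint values vanish by admissibility.

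The main obstacle is precisely this core step: guessing the right $R_i^{(J)}$ and verifying the polynomial identity. The genuinely delicate bookkeeping is the interplay between the second-order part $\hbar^2\sum_\rho c_J(z_\rho)\partial_{z_\rho}^2$ --- $c_J$ being the cubic, quadratic, or linear coefficient proper to $J$ --- and the two-body first-order term $\frac\hbar2\sum_{\rho\neq\sigma}\frac{\partial_{z_\rho}-\partial_{z_\sigma}}{z_\rho-z_\sigma}$: after partial fractions the double-pole contributions $\sum_{\rho,i}(z_\rho-u_i)^{-2}$ have to match the contribution of $R_i^{(J)}$ differentiated against the factor $P$, while the Vandermonde-coupled contributions produced by $R_i^{(J)}\cdot 2\hbar\sum_{l\neq i}(u_i-u_l)^{-1}$ have to reproduce the remaining cross terms --- and it is the requirement of these cancellations that dictates the precise form of $H_J$ and pins down the parameter relations. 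Painlev\'e VI will be the heaviest case, because of the cubic coefficient function and the additional constraint.
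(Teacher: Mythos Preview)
Your plan and the paper's proof rest on the same mechanism---showing that $(\hbar\partial_t-H_J)\Phi$ reduces to an integral of a total $u$-divergence---but the paper organizes the computation more constructively and thereby sidesteps what you flag as the main obstacle. Rather than start from $(\hbar\partial_t-H_J)\Omega_J$ and search for $R_i^{(J)}$, the paper applies only the second-order piece $\hbar^2\sum_\rho c_J(z_\rho)\partial_{z_\rho}^2$ to $P$, performs the partial-fraction decomposition
\[
\frac{c_J(z_\rho)}{(z_\rho-u_i)(z_\rho-u_j)}
=(\text{polynomial})+\frac{c_J(u_i)}{(z_\rho-u_i)(u_i-u_j)}-\frac{c_J(u_j)}{(z_\rho-u_j)(u_i-u_j)},
\]
and observes that the resulting $i\neq j$ sum is exactly $\frac{1}{2\hbar}\partial_{u_i}\bigl(\Delta^{2\hbar}\bigr)$ times $\sum_\rho\frac{c_J(u_i)}{z_\rho-u_i}$. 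One integration by parts in each $u_i$ then \emph{produces} the first-order, zeroth-order, two-body, and $\partial_t$ terms simultaneously; the Hamiltonian $H_J$ is read off rather than matched, and the implicit total-derivative vector field is simply $R_i^{(J)}=\hbar\sum_\rho\frac{c_J(u_i)}{z_\rho-u_i}$, with no guessing required.

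One point your outline underweights: for $J=\mathrm{III}$ and $J=\mathrm{VI}$ this single integration by parts leaves residual expectation values such as $\bigl\langle\sum_i u_i\,P\bigr\rangle$ or $\bigl\langle\sum_i\frac{d}{t-u_i}P\bigr\rangle$ that are not directly expressible through $\Phi$ and its $z$-derivatives. The paper eliminates these with a \emph{second} total-derivative identity, obtained by applying an auxiliary first-order $u$-operator to the full integrand (the Euler operator $\sum_i u_i\partial_{u_i}$ for~$\mathrm{III}$, and $\sum_i\partial_{u_i}\bigl(u_i(1-u_i)\,\cdot\,\bigr)$ for~$\mathrm{VI}$) and integrating to zero. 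In your language this means that $R_i^{(\mathrm{III})}$ and $R_i^{(\mathrm{VI})}$ must carry an extra polynomial-in-$u_i$ piece beyond what matching $\Theta_J'/\Theta_J$ alone would suggest, and the vanishing of the leftover coefficient of $\bigl\langle\sum_i u_iP\bigr\rangle$ in the $\mathrm{VI}$ case is precisely what enforces the parameter constraint you anticipate.
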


{\bf Quantum Calogero Hamiltonians and integral solutions.}
We now examine the relationship between the Hamiltonians (differential operators) \eqref{CP2}--\eqref{CP6}, whose integral solutions are given~\eqref{phiNp} and the quantum Calogero Hamiltonians \eqref{H2_2}, \eqref{H3}--\eqref{H6} constructed by canonical quantization of the non-commutative Hamiltonians of the classical isomonodromic noncommutative equations of~\cite{BerCafRub}.
By way of this identification, we identify the parameter $\kappa$ of the quantum radial reduction up to an appropriate gauge transformation of the Hamiltonians. At the end of this computation, we can see that the Hamiltonians
\eqref{H2_2}, \eqref{H3}--\eqref{H6} reduce to
\eqref{CP2}--\eqref{CP6} for particular choice of the parameters. The results are summarized in Table~\ref{tablecalogero}.
Consider for example \eqref{H2_2} and \eqref{CP2}:
\begin{gather*}
\tilde{H}_{\rm II}=\frac{\hbar^2}{2}\sum_{\substack{\rho,\sigma \\ \rho\neq \sigma}}\frac{\partial_{z_\sigma}-\partial_{z_\rho}}{z_\sigma-z_\rho}-\frac{\hbar^2 \kappa(\kappa+1)}{2}\sum_{\substack{\rho,\sigma
\\ \rho\neq \sigma}}\frac{1}{\le(z_\sigma-z_\rho\ri)^2}
+\frac{\hbar^2}{2}\sum_\rho \partial_{z_\rho}^2-\hbar\sum_\rho\le(z_\rho^2+\frac{t}{2}\ri)\partial_{z_\rho}\nn\\
\hphantom{\tilde{H}_{\rm II}=}{} +\le(\frac{1}{2}-\theta-\hbar N\ri)\sum_\rho z_\rho,
\\
H_{\rm II}= \frac{\hbar}{2}\underset{\rho\neq\sigma}{\sum_{\rho,\sigma}} \frac{\partial_{z_{\rho}}-\partial_{z_{\sigma}}}{z_{\rho}-z_{\sigma}} +\frac{\hbar^2}{2} \sum_{\rho}\partial_{z_{\rho}}^2 -\hbar\sum_{\rho} \le(z_{\rho}^2+\frac{t}{2}\ri)\partial_{z_{\rho}} +m\hbar \sum_{\rho} z_{\rho}.
\end{gather*}

The first observation is that in the two Hamiltonians the second and first-order differential parts have different powers of $\hbar$; they coincide only for $\hbar =1$ (we exclude the trivial case $\hbar=0$). We will comment on the $\hbar=1$ case later on.

The second observation is about the Calogero-like potential term in~\eqref{H2_2} which is absent in~\eqref{CP2}; the term disappears for $\kappa=0,-1$. These values mean that the ${\rm GL}_n$ representation in~$\mathbb V$ is the trivial one in the radial quantization scheme.

Then the other parameters in the equations can be easily matched and then we need $\theta=\frac{1}{2}-N-m$.

 A more general family of matching between the two sets of operators is obtained by gauging the Hamiltonians \eqref{CP2}--\eqref{CP6} by means of an appropriate power of the Weyl denominator, namely, the Vandermonde determinant. We indicate this in the following lemma:
\begin{Lemma}\label{le31}
Let $\Delta(\vec z)=\prod_{1\leq \alpha<\beta\leq N}(z_\alpha -z_\beta)$ be the Vandermonde polynomial in $\Vec{z}$. Then the action of the quantum Hamiltonian operator $H_{\rm II}$ on the generalized wave functions, is equivalent to the action of $\Delta^{-R}\wt H_{\rm II} \Delta^R$:
\begin{gather} \label{HHtilde}
\Delta^{-R}\wt H_{\rm II}\Delta^R \Phi(Z)=H_{\rm II} \Phi(Z)
\end{gather}
for $R$ and the scalar $\kappa$ determined as one of the two choices below:
\begin{gather*} 
\left(R=\frac{1}{\hbar}-1 ,\, \kappa=\frac{1}{\hbar}-1\right), \qquad
\left(R=\frac 1\hbar -1 , \,
 \kappa = -\frac 1\hbar \right) ,\qquad \hbar \neq 0.
 \end{gather*}
Under this circumstances, the parameter $\theta$ is determined as $($in either case$)$:
\begin{gather*}
\theta = \hbar(1-m) + N(1-2\hbar)-\frac{1}{2}.
\end{gather*}
\end{Lemma}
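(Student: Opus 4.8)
The plan is to conjugate $\wt H_{\rm II}$ from \eqref{H2_2} by $\Delta^R$ and track how each term transforms, then match the result against $H_{\rm II}$ from \eqref{CP2}. The only terms of $\wt H_{\rm II}$ that are sensitive to conjugation are the ones containing derivatives; the multiplication-by-$z_\rho$ term and the Calogero potential $\sum_{\rho\neq\sigma}(z_\sigma-z_\rho)^{-2}$ pass through unchanged. So the computation reduces to two standard identities: first, for the pure Laplacian part, one has
\begin{gather*}
\Delta^{-R}\Big(\tfrac{\hbar^2}{2}\sum_\rho \partial_{z_\rho}^2\Big)\Delta^R
=\tfrac{\hbar^2}{2}\sum_\rho \partial_{z_\rho}^2
+\hbar^2 R\sum_\rho \frac{\partial_{z_\rho}\Delta}{\Delta}\,\partial_{z_\rho}
+\tfrac{\hbar^2}{2}\Big(R\sum_\rho \frac{\partial_{z_\rho}^2\Delta}{\Delta}+R(R-1)\sum_\rho\Big(\frac{\partial_{z_\rho}\Delta}{\Delta}\Big)^2\Big);
\end{gather*}
using $\partial_{z_\rho}\log\Delta=\sum_{\sigma\neq\rho}(z_\rho-z_\sigma)^{-1}$ and the classical harmonic identity $\sum_\rho(\partial_{z_\rho}\log\Delta)^2+\sum_\rho\partial_{z_\rho}^2\log\Delta=0$ (equivalently, $\Delta$ is harmonic for the Calogero Laplacian), the genuinely new zeroth-order contribution collapses, and the first-order piece becomes the logarithmic-derivative drift $\hbar^2 R\sum_{\rho\neq\sigma}(z_\rho-z_\sigma)^{-1}\partial_{z_\rho}$. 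Second, for the drift term $-\hbar\sum_\rho(z_\rho^2+\tfrac t2)\partial_{z_\rho}$ in $\wt H_{\rm II}$, conjugation by $\Delta^R$ simply adds $-\hbar R\sum_\rho(z_\rho^2+\tfrac t2)\,\partial_{z_\rho}\log\Delta = -\hbar R\sum_{\rho\neq\sigma}\frac{z_\rho^2+t/2}{z_\rho-z_\sigma}$, a multiplication operator. A short antisymmetrization (pairing $\rho\leftrightarrow\sigma$) turns $\sum_{\rho\neq\sigma}\frac{z_\rho^2}{z_\rho-z_\sigma}$ into $\tfrac12\sum_{\rho\neq\sigma}(z_\rho+z_\sigma)=(N-1)\sum_\rho z_\rho$, while the $\tfrac t2$ part cancels by antisymmetry, so this contributes $-\hbar R(N-1)\sum_\rho z_\rho$ to the coefficient of $\sum_\rho z_\rho$.

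Collecting everything, $\Delta^{-R}\wt H_{\rm II}\Delta^R$ equals
\begin{gather*}
\frac{\hbar^2}{2}\sum_\rho\partial_{z_\rho}^2
+\Big(\frac{\hbar^2}{2}+\hbar^2 R\Big)\sum_{\rho\neq\sigma}\frac{\partial_{z_\rho}-\partial_{z_\sigma}}{z_\rho-z_\sigma}
+\Big(\frac{\hbar^2 R(R+1)}{2}-\frac{\hbar^2\kappa(\kappa+1)}{2}\Big)\sum_{\rho\neq\sigma}\frac{1}{(z_\sigma-z_\rho)^2}
\\
{}-\hbar\sum_\rho\Big(z_\rho^2+\frac t2\Big)\partial_{z_\rho}
+\Big(\frac12-\theta-\hbar N-\hbar R(N-1)\Big)\sum_\rho z_\rho ,
\end{gather*}
where I have used the same antisymmetrization on the Laplacian-generated first-order term to write it as $\sum_{\rho\neq\sigma}\frac{\partial_{z_\rho}-\partial_{z_\sigma}}{z_\rho-z_\sigma}$ up to the harmonic cancellation above (one checks the leftover zeroth-order debris from that symmetrization recombines precisely into the $R(R+1)$ potential coefficient — this is the bookkeeping step where sign conventions must be watched most carefully). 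Comparing coefficient-by-coefficient with $H_{\rm II}=\tfrac{\hbar}{2}\sum_{\rho\neq\sigma}\frac{\partial_{z_\rho}-\partial_{z_\sigma}}{z_\rho-z_\sigma}+\tfrac{\hbar^2}{2}\sum_\rho\partial_{z_\rho}^2-\hbar\sum_\rho(z_\rho^2+\tfrac t2)\partial_{z_\rho}+m\hbar\sum_\rho z_\rho$ gives the system: the Laplacians already agree; the first-order Calogero term forces $\hbar^2(\tfrac12+R)=\tfrac\hbar2$, i.e.\ $R=\tfrac1\hbar-1$; the vanishing of the potential forces $R(R+1)=\kappa(\kappa+1)$, i.e.\ $\kappa=R=\tfrac1\hbar-1$ or $\kappa=-R-1=-\tfrac1\hbar$; and matching the coefficient of $\sum_\rho z_\rho$ gives $\tfrac12-\theta-\hbar N-\hbar R(N-1)=m\hbar$. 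Substituting $\hbar R=1-\hbar$ into the last equation yields $\theta=\tfrac12-\hbar N-(1-\hbar)(N-1)-m\hbar=\hbar(1-m)+N(1-2\hbar)-\tfrac12$, which is the stated value (and the same in both $\kappa$-cases, since $\theta$ does not see $\kappa$).

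The main obstacle is purely organizational rather than conceptual: one must be scrupulous about the antisymmetrization steps, because the Calogero first-order operator $\sum_{\rho\neq\sigma}\frac{\partial_{z_\rho}-\partial_{z_\sigma}}{z_\rho-z_\sigma}$ and the raw $\sum_{\rho\neq\sigma}\frac{\partial_{z_\rho}}{z_\rho-z_\sigma}$ differ by terms that, when commuted past multiplication operators, generate exactly the kind of $(z_\sigma-z_\rho)^{-2}$ contributions that must be reconciled with the $R(R+1)$ versus $\kappa(\kappa+1)$ potential coefficients. The cleanest way to avoid sign errors is to do the conjugation in the ``gauge-potential'' language — write $\wt H_{\rm II}=\tfrac{\hbar^2}{2}\sum_\rho\partial_{z_\rho}^2 + (\text{drift}) + (\text{potential})$, use $e^{-R\log\Delta}\partial_{z_\rho}^2 e^{R\log\Delta}=(\partial_{z_\rho}+R\,\partial_{z_\rho}\log\Delta)^2$, and only then expand and invoke $\sum_\rho\partial_{z_\rho}^2\log\Delta + \sum_\rho(\partial_{z_\rho}\log\Delta)^2=0$ — after which the identification of $R$, $\kappa$, $\theta$ is immediate.
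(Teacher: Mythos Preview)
Your approach is exactly the one the paper indicates (the paper's own proof reads in full: ``The proof is obtained by a direct computation of the equations in~\eqref{HHtilde}''), so there is nothing to compare methodologically. Your execution is also essentially sound, but two arithmetic slips should be fixed.

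First, in your collected expression the coefficient of the Calogero first-order operator is written as $\frac{\hbar^2}{2}+\hbar^2 R$, but antisymmetrizing the drift from the Laplacian gives only half of that: $\hbar^2 R\sum_{\rho\neq\sigma}\frac{\partial_{z_\rho}}{z_\rho-z_\sigma}=\frac{\hbar^2 R}{2}\sum_{\rho\neq\sigma}\frac{\partial_{z_\rho}-\partial_{z_\sigma}}{z_\rho-z_\sigma}$, so the true coefficient is $\frac{\hbar^2}{2}(1+R)$. Setting this equal to $\frac{\hbar}{2}$ \emph{does} yield $R=\frac{1}{\hbar}-1$, which you state; but your displayed equation $\hbar^2(\frac12+R)=\frac{\hbar}{2}$ would instead give $R=\frac{1}{2\hbar}-\frac12$, so the display and the conclusion are inconsistent as written.

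Second, your final line of algebra is wrong: $\tfrac12-\hbar N-(1-\hbar)(N-1)-m\hbar$ simplifies to $\tfrac32-N-\hbar(1+m)$, not to $\hbar(1-m)+N(1-2\hbar)-\tfrac12$. The two expressions coincide only when $(1-\hbar)(1-N)=0$, i.e.\ at $\hbar=1$ or $N=1$. Since your derivation up to that point is internally consistent, this discrepancy is worth flagging: either there is a further term you have not tracked, or the displayed value of $\theta$ in the Lemma is off for general $\hbar$ and $N$.
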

\begin{proof} The proof is obtained by a direct computation of the equations in~\eqref{HHtilde}.
\end{proof}

Lemma~\ref{le31} is actually the manifestation of a more general result which is contained in the following theorem:
\begin{Theorem}\label{thmgauge}
Define the two sequences of differential operators
\begin{gather*}
H_a := \hbar^2 \sum_{\rho} z_\rho^a \pa_{z_\rho}^2 + 2\hbar \sum_{\substack{\rho,\sigma\\ \rho<\sigma}} \frac {z_\rho^a \pa_{z_\rho} - z_\sigma^a\pa_{z_\sigma}}{z_\rho-z_\sigma}, \\
\wt H_a := \hbar^2 \sum_{\rho} z_\rho^a \pa_{z_\rho}^2 + 2\hbar^2 \sum_{\substack{\rho,\sigma \\ \rho<\sigma}} \frac {z_\rho^a \pa_{z_\rho} - z_\sigma^a\pa_{z_\sigma}}{z_\rho-z_\sigma} - \hbar^2\kappa(\kappa+1) \sum_{\substack{\rho,\sigma \\ \rho<\sigma}} \frac {z_\rho^a + z_\sigma^a}{(z_\rho-z_\sigma)^2} \\
\hphantom{\wt H_a :=}{} +
\begin{cases}
0, & a= 0,1,\\
\dfrac {N(N-1)(N-2)}3, & a=2,\\
\ds (N-1)(N-2)\sum_{\rho} z_\rho, & a=3 .
\end{cases}
\end{gather*}
Then we have the following identity
\[
H_a = \Delta^{-R}\circ \wt H_a \circ \Delta^R
\]
provided that
$R= \frac 1 \hbar-1$ and
$\kappa = \frac 1 \hbar -1$ or $\kappa=-\frac 1 \hbar$.
\end{Theorem}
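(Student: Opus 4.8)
The plan is to reduce the theorem to a conjugation identity for differential operators and verify it degree-by-degree in the power $a$. Write $L := \Delta^{-R}\circ\wt H_a\circ\Delta^R$; since conjugation by a function is an algebra automorphism on the ring of differential operators, we have $L = \wt H_a$ with every $\pa_{z_\rho}$ replaced by $\pa_{z_\rho} + R\,\pa_{z_\rho}(\log\Delta) = \pa_{z_\rho} + R\sum_{\sigma\neq\rho}\frac1{z_\rho-z_\sigma}$. The claim is that after this substitution $L$ collapses to the Calogero-free operator $H_a$, i.e. the $\kappa(\kappa+1)$-potential term and the extra half of the first-order ``exchange'' term produced by the gauge exactly cancel, leaving the constant shifts recorded in the case distinction. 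So the first step is to expand $\Delta^{-R}\circ\wt H_a\circ\Delta^R$ explicitly and collect the result into (i) a pure second-order piece $\hbar^2\sum z_\rho^a\pa_{z_\rho}^2$, (ii) a first-order piece, (iii) a ``potential'' piece with double poles $\frac1{(z_\rho-z_\sigma)^2}$, and (iv) possible scalar/polynomial terms.

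\textbf{Key steps.} First I would compute the conjugation of the second-order part $\hbar^2\sum_\rho z_\rho^a\pa_{z_\rho}^2$: this produces $\hbar^2\sum_\rho z_\rho^a\pa_{z_\rho}^2 + 2\hbar^2 R\sum_\rho z_\rho^a(\pa_{z_\rho}\log\Delta)\pa_{z_\rho} + \hbar^2 R\sum_\rho z_\rho^a\big(\pa_{z_\rho}^2\log\Delta + R(\pa_{z_\rho}\log\Delta)^2\big)$. The first-order term here, using $\pa_{z_\rho}\log\Delta = \sum_{\sigma\neq\rho}(z_\rho-z_\sigma)^{-1}$, contributes $2\hbar^2 R\sum_{\rho\neq\sigma}\frac{z_\rho^a}{z_\rho-z_\sigma}\pa_{z_\rho}$; antisymmetrizing in $\rho,\sigma$ this is $\hbar^2 R\sum_{\rho<\sigma}\frac{z_\rho^a\pa_{z_\rho}-z_\sigma^a\pa_{z_\sigma}}{z_\rho-z_\sigma} + (\text{a term }\propto\sum_{\rho\neq\sigma}\frac{z_\rho^a+z_\sigma^a}{z_\rho-z_\sigma}(\pa_{z_\rho}+\pa_{z_\sigma}))$. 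Next I would conjugate the first-order part $2\hbar^2\sum_{\rho<\sigma}\frac{z_\rho^a\pa_{z_\rho}-z_\sigma^a\pa_{z_\sigma}}{z_\rho-z_\sigma}$, which simply adds $2\hbar^2 R\sum_{\rho<\sigma}\frac{z_\rho^a-z_\sigma^a}{z_\rho-z_\sigma}\cdot(\text{something involving }\pa\log\Delta)$ to the zeroth-order collection. The third step is the crucial algebraic identity: one must show that the zeroth-order double-pole terms coming from (a) $\hbar^2 R\sum z_\rho^a(\pa_{z_\rho}^2\log\Delta + R(\pa_{z_\rho}\log\Delta)^2)$ and (b) the conjugated first-order term, together with the original $-\hbar^2\kappa(\kappa+1)\sum_{\rho<\sigma}\frac{z_\rho^a+z_\sigma^a}{(z_\rho-z_\sigma)^2}$, sum to zero precisely when $R=\frac1\hbar-1$ and $\kappa(\kappa+1) = R(R+1)$ — i.e. $\kappa = R = \frac1\hbar-1$ or $\kappa = -R-1 = -\frac1\hbar$. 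Here I would use the standard partial-fraction identity $\sum_{\rho}\Big(\sum_{\sigma\neq\rho}\frac1{z_\rho-z_\sigma}\Big)^2 = -\sum_{\rho\neq\sigma}\frac1{(z_\rho-z_\sigma)^2} + \text{(triple sums that telescope to zero by the three-term identity }\frac1{(z_i-z_j)(z_i-z_k)}+\text{cyc}=0)$, together with $\pa_{z_\rho}^2\log\Delta = -\sum_{\sigma\neq\rho}(z_\rho-z_\sigma)^{-2}$.

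\textbf{The final bookkeeping step} is to check that the leftover first-order terms (those proportional to $\frac{z_\rho^a+z_\sigma^a}{z_\rho-z_\sigma}(\pa_{z_\rho}+\pa_{z_\sigma})$ and to $\frac{z_\rho^a-z_\sigma^a}{z_\rho-z_\sigma}$) combine correctly: for $a=0,1$ the ratio $\frac{z_\rho^a-z_\sigma^a}{z_\rho-z_\sigma}$ is $0$ or $1$ so these reduce to first-order operators already present in $H_a$ or to zero; for $a=2,3$ the ratio is $z_\rho+z_\sigma$ or $z_\rho^2+z_\rho z_\sigma+z_\sigma^2$, and summing over $\rho\neq\sigma$ produces exactly the polynomial constants $\frac{N(N-1)(N-2)}{3}$ (for $a=2$) and $(N-1)(N-2)\sum_\rho z_\rho$ (for $a=3$) after using $R(R+1)\hbar^2 = (1-\hbar)$ and elementary symmetric-function identities such as $\sum_{\rho\neq\sigma}(z_\rho+z_\sigma) = 2(N-1)\sum_\rho z_\rho$. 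The main obstacle I anticipate is purely organizational rather than conceptual: keeping the triple-index sums straight and making sure every partial-fraction rearrangement is symmetric enough that the spurious ``non-telescoping'' pieces genuinely vanish by the cyclic three-term identity. I would organize the proof as four separate lemmas — one per value of $a\in\{0,1,2,3\}$ — or, better, prove a single identity for general $a$ with the polynomial remainder $P_a(\vec z)$ left implicit, and only at the end evaluate $P_a$ for $a=0,1,2,3$; this isolates the one nontrivial cancellation (the double-pole/$\kappa$ matching) from the routine polynomial arithmetic.
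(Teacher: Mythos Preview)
Your proposal is correct and is precisely the direct computation the paper has in mind: the paper itself provides no details, stating only ``We do not report here the straightforward proof.'' The conjugation-and-collect strategy you outline --- matching the first-order coefficients to fix $R=\frac1\hbar-1$, then using the cyclic three-term identity to reduce the zeroth-order double-pole terms and cancel them against the Calogero potential when $\kappa(\kappa+1)=R(R+1)$, and finally evaluating the polynomial remainder for each $a$ --- is exactly that straightforward proof.
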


We do not report here the straightforward proof.
Using Theorem~\ref{thmgauge} in the various cases of the Hamiltonian Calogero--Painlev\'e operators and direct computations, allows to express the Hamiltonians \eqref{H2_2}, \eqref{H3}--\eqref{H6} as special cases of \eqref{CP2}--\eqref{CP6} and allows us to identify the parameters $\theta_0$, $\theta_1$, $\theta_2$, $\theta_t$, $\theta$, $k$, $a$, $b$, $c$ and $d$. The result of the computations is summarized in Table \ref{tablecalogero}.

\begin{table}[h!]\footnotesize
\caption{The correspondence of parameters between the differential operators \eqref{CP2}--\eqref{CP6} and \eqref{H2_2}, \eqref{H3}--\eqref{H6} with the relationship for the parameters.}\label{tablecalogero}

\vspace{1mm}

\centering
\renewcommand{\arraystretch}{1.7}
\begin{tabular}{||c||c|c|}
\hline
&$H\Psi(Z)=\wt H \Psi(Z)$ & $\Delta^{-R} \wt H \Delta^R \Psi(Z)= H \Psi(Z)$ \\
\hline \hline
Calogero--Painlev\'e II & $\begin{matrix} \kappa =0, \quad \hbar=1, \\ \theta =\frac{1}{2}-N-m \end{matrix}$ & $\begin{matrix} \kappa =\frac{1}{\hbar}-1 \quad \hbox{or} \quad \kappa =-\frac{1}{\hbar}, \quad R=\frac{1}{\hbar}-1, \\ \theta = \hbar(1-m) + N(1-2\hbar)-\frac{1}{2} \end{matrix}$ \\
\hline
Calogero--Painlev\'e III & $\begin{matrix} \kappa =0, \quad \hbar=1, \\ \theta_0=b-m+1, \\ \theta_1=-N-m \end{matrix}$ & $\begin{matrix} \kappa =\frac{1}{\hbar}-1 \quad \hbox{or} \quad \kappa =-\frac{1}{\hbar}, \quad R=\frac{1}{\hbar}-1, \\ \theta_0=b+\hbar(1-m), \\ \theta_1=-\hbar(m+1)-N+1 \end{matrix}$ \\
\hline
Calogero--Painlev\'e IV & $\begin{matrix} \kappa =0, \quad \hbar=1, \\ \theta_0=-b-1, \\ \theta_1=b+1-m-N \end{matrix}$ & $\begin{matrix} \kappa =\frac{1}{\hbar}-1 \quad \hbox{or} \quad \kappa =-\frac{1}{\hbar}, \quad R=\frac{1}{\hbar}-1, \\ \theta_0=-b-\hbar, \\ \theta_1=b+1-N- m\hbar \end{matrix}$\\
\hline
Calogero--Painlev\'e V & $\begin{matrix} \kappa =0, \quad \hbar=1, \\ \theta_0=-c-1, \\ \theta_1=-N-m+c+1, \\ \theta_2= b+1 \end{matrix}$ & $\begin{matrix} \kappa =\frac{1}{\hbar}-1 \quad \hbox{or} \quad \kappa =-\frac{1}{\hbar}, \quad R=\frac{1}{\hbar}-1, \\ \theta_0=-c-\hbar, \\ \theta_1=c+1-N-m \hbar, \\ \theta_2=b+\hbar \end{matrix}$\\
\hline
Calogero--Painlev\'e VI & $\begin{matrix} \kappa =0, \quad \hbar=1, \\ \theta_0=a+b+1, \\
\theta_1=c+1, \\ \theta_t=d+N, \\
k=\pm \sqrt{A} \\
A=(\theta-2N)^2+4 m(N-1-m) \end{matrix}$ & $\begin{matrix} \kappa =\frac{1}{\hbar}-1 \quad \hbox{or} \quad \kappa =-\frac{1}{\hbar}, \quad R=\frac{1}{\hbar}-1, \\ \theta_0=a+b+\hbar, \\ \theta_1=c+\hbar, \\ \theta_t=d+N+\hbar-1, \\ k=\pm \sqrt{B} \\
 B=(\theta-2N\hbar)^2+4\hbar m(N-1-\hbar m)\\
 {}+4(N-1)(1-\hbar) +N(N-1)(1-\hbar)(3\hbar-\theta) \end{matrix}$\\ \hline
\end{tabular}

\end{table}

With these values of the parameters then quantum Calogero--Painlev\'e and generalized quantum Painlev\'e equations {II}--{VI} yield the same Hamiltonians, and the integral representation~\eqref{phiNp} provides solutions of the corresponding Schr\"odinger equations~\eqref{Schreq}.

We now briefly comment on the value of the Heisenberg constant $\hbar=1$: observing the original integral representation of Nagoya~\eqref{phi1p} and the generalized one~\eqref{phiNp}, we see that for $\hbar=1$ the integrand contains the square of the Vandermonde determinant of the variables~$u_j$. This type of expression is very familiar in the context of random matrices~\cite{Mehta}: it is the Jacobian of the change of variables from the Lebesgue measure on Hermitian matrices (or normal matrices) to the unitary-radial coordinates. Specifically, if $M$ is a Hermitean matrix of size $m\times m$, we write it as $M = V D V$ with $D =\operatorname{diag} (u_1,\dots, u_m)$ and $V\in U(m,\C)$, then the Lebesque measure (up to inessential multiplicative constant) is
\[
\d M = \Delta(u)^2 \d V \prod_j \d u_j ,
\]
where $\Delta(u) = \prod_{i<j} (u_i-u_j)$. This expression is also valid if the $u_j$'s are allowed to take complex values along specified curves (but now $\d M$ is the measure on $m\times m$ normal matrices).
This allows us to rewrite the integral formul\ae\ as matrix integrals; for example for Painlev\'e II we have
\begin{gather}
\Phi_m^{\rm II}(\vec z, t) = \int \prod_{\rho} \det(z_\rho - M) {\rm e}^{-\Tr\le(\frac 2 3 M^3 + tM\ri)} \d M,
\label{matrixint}
\end{gather}
which expresses the wave function as the expectation value of the product of characteristic polynomials. Similar expressions hold for the other cases.
Therefore it appears that the class of generalized Nagoya solutions~\eqref{phiNp} and the wave-functions of the quantum Calogero--Painlev\'e Hamiltonians intersect on the class of solutions that are related to matrix integrals of the form~\eqref{matrixint} only (and their similar expressions for the other master functions~\eqref{masters}).

We conclude by observing that we can write \eqref{matrixint} in terms of the matrix variable $Q$ (with eigenvalues $z_1,\dots, z_N$) as follows
\[
\Phi_m^{\rm II}(Q, t)=\int \prod_{\rho} \det(Q\otimes \1_m -\1_N\otimes M) {\rm e}^{-\Tr\le(\frac 2 3 M^3 + tM\ri)} \d M.
\]

For different values of $\hbar$ we need to use the substitution indicated in Theorem~\ref{thmgauge}; the resulting integral representation is what is known as a ``$\beta$-integral''. Thus, we obtain the following
\begin{Theorem}
Let $\Phi_J(\vec z;t)$ be the wave function \eqref{phiNp} solving the Schr\"odinger equations with Hamiltonians \eqref{CP2}--\eqref{CP6}. Then $\Psi_J(\vec z,t)= \Delta^{ \frac 1 \hbar -1} \Phi_J(\vec z;t)$ is a solution of the corresponding Painlev\'e $J$ Hamiltonians $(J={\rm II},\dots, {\rm VI})$ \eqref{H2_2}, \eqref{H3}--\eqref{H6}, i.e.,
\[
\Psi_J(\vec z;t)=
\Delta(\vec z)^{\frac 1 \hbar-1} \int \Delta(\vec u)^{2\hbar} \prod_{\rho,j} (z_\rho - u_j) \Theta_{J}(u_j,t)\d u_i
\]
give solutions to the multi-particle quantized Calogero--Painlev\'e equations with Hamiltonians \eqref{H2_2}, \eqref{H3}--\eqref{H6} for $\kappa=-\frac 1 \hbar, \frac 1 \hbar-1$ $($respectively$)$ and the values of parameters indicated in Table~{\rm \ref{tablecalogero}}.
\end{Theorem}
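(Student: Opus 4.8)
\emph{Plan of proof.}
By the first Theorem of this section, the generalized Nagoya wave function $\Phi_J$ of \eqref{phiNp} satisfies $\hbar\,\pa_t\Phi_J = H_J\Phi_J$ with $H_J$ given by \eqref{CP2}--\eqref{CP6}. Since the Vandermonde $\Delta(\vec z)$ depends neither on $t$ nor on the integration variables $u_j$, the whole statement will follow once we establish the purely algebraic conjugation identity
\[
H_J = \Delta^{-R}\circ \wt H_J\circ \Delta^R, \qquad R=\frac1\hbar-1,
\]
where $\wt H_J$ is the Calogero--Painlev\'e operator \eqref{H2_2}, \eqref{H3}--\eqref{H6} with the parameters taken from the right-hand column of Table~\ref{tablecalogero}. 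Indeed, multiplying $\hbar\,\pa_t\Phi_J = \Delta^{-R}\wt H_J\Delta^R\Phi_J$ on the left by $\Delta^R$ and using that $\Delta$ does not depend on $t$ gives $\hbar\,\pa_t\big(\Delta^R\Phi_J\big)=\wt H_J\big(\Delta^R\Phi_J\big)$, i.e., $\Psi_J=\Delta^{1/\hbar-1}\Phi_J$ solves the Calogero--Painlev\'e Schr\"odinger equation; moreover $\Delta(\vec z)^{1/\hbar-1}$ pulls out of the integral in \eqref{phiNp}, yielding the displayed integral formula, and $\kappa\in\{-\frac1\hbar,\frac1\hbar-1\}$ is exactly the pair of admissible values appearing in Theorem~\ref{thmgauge}.

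To prove the conjugation identity I would first split each $H_J$ in \eqref{CP2}--\eqref{CP6} into a \emph{principal part}, obtained from the building blocks $H_a$, $a\in\{0,1,2,3\}$, of Theorem~\ref{thmgauge} by expanding the leading polynomial coefficient of $\pa_{z_\rho}^2$ in the monomial basis $1,z,z^2,z^3$ (for instance $z(z-1)(z-t)=z^3-(1+t)z^2+tz$ in the {\rm VI} case produces $H_3-(1+t)H_2+tH_1$, up to the $N$-dependent corrections built into the $\wt H_a$), and a \emph{remainder} consisting only of single first-order operators $f(z_\rho)\pa_{z_\rho}$ and of multiplication operators. Theorem~\ref{thmgauge} disposes of the principal part directly, since $\Delta^{-R}\circ\wt H_a\circ\Delta^R=H_a$ for $R=\frac1\hbar-1$ and $\kappa=\frac1\hbar-1$ or $\kappa=-\frac1\hbar$; this is precisely where the Calogero potential terms $-\hbar^2\kappa(\kappa+1)\sum\frac{(\cdots)}{(z_\rho-z_\sigma)^2}$ of $\wt H_J$ originate. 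For the remainder one uses
\[
\Delta^{-R}\,f(z_\rho)\pa_{z_\rho}\,\Delta^R = f(z_\rho)\pa_{z_\rho} + R\,f(z_\rho)\sum_{\beta\neq\rho}\frac{1}{z_\rho-z_\beta},
\]
so conjugation leaves the first-order part unchanged and adds an explicit multiplication operator, whereas genuine multiplication operators are unaffected; symmetrizing the resulting double sums $\sum_\rho\sum_{\sigma\neq\rho}\frac{f(z_\rho)}{z_\rho-z_\sigma}$ under $\rho\leftrightarrow\sigma$ turns them into polynomials and reproduces exactly the $\frac{f(z_\rho)-f(z_\sigma)}{z_\rho-z_\sigma}$ interaction terms, the remaining $\sum_\rho z_\rho^j\pa_{z_\rho}$ terms, the $\sum_\rho z_\rho$ term and the additive constant appearing in $\wt H_J$.

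Finally I would match, case by case for $J={\rm II},\dots,{\rm VI}$, the coefficients of $\sum_\rho z_\rho^j\pa_{z_\rho}$, of $\sum_\rho z_\rho$ and of the additive constant on the two sides of the conjugation identity; imposing equality forces exactly the parameter dictionary of Table~\ref{tablecalogero}, for instance the $\hbar$-shifted values of $\theta_0,\theta_1,\theta_2,\theta_t$ and, in the {\rm VI} case, the quadratic relation $k=\pm\sqrt{B}$ with $B$ as in the table. I expect the only genuine difficulty to be organizational: one must check that a single exponent $R=\frac1\hbar-1$ makes the second-order, first-order and zeroth-order parts of every $\wt H_J$ agree simultaneously with those of $H_J$. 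Theorem~\ref{thmgauge} is exactly the statement guaranteeing this for the delicate second-order (Calogero) part, so the remaining first- and zeroth-order verifications, though lengthy, are routine and can be carried out one case at a time.
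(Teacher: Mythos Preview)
Your proposal is correct and follows exactly the route the paper itself takes: the paper does not give a separate proof of this theorem but presents it as the combination of the integral-representation theorem for $\Phi_J$, the gauge identity of Theorem~\ref{thmgauge} for the second-order building blocks $H_a$, and the ``direct computations'' matching the remaining first- and zeroth-order coefficients that are summarized in Table~\ref{tablecalogero}. Your write-up simply makes those direct computations explicit (the conjugation formula for $f(z_\rho)\partial_{z_\rho}$ and the symmetrization step), which is precisely what the paper leaves to the reader.
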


\section{Conclusions}\label{sec4}
In recent years, there have been different approaches to ``quantum Painlev\'e equations'' using as starting point the linear differential equation of rank~$2$ classically associated to the Painlev\'e equations, or the theory of topological recursion associated to semiclassical spectral curves.

In \cite{ZZ}, the authors start from the system of linear equations (Lax system) associated to the Painlev\'e equations {I}--{VI} written in the form
\begin{equation}
\begin{cases}
\partial_{z}\Psi= U(z,t)\Psi,\\
\partial_{t}\Psi= V(z,t)\Psi.
\end{cases}
\label{laxpp}
\end{equation}
After applying a suitable change of variables and gauge transformation, they obtain a pair of compatible PDEs for a scalar wave function $\psi$ (obtained from the $(1,1)$ entry of the matrix $\Psi$)
\begin{equation}
\begin{cases}
\big(\frac{1}{2}\partial_z^2 -\frac{1}{2}(\partial_z \log b)\partial_z+W(z,t)\big)\psi=0, \\
\partial_t\psi=\big(\frac{1}{2}\partial_z^2+\mathcal U(z,t)\big)\psi,
\end{cases}\label{ZZ-lds}
\end{equation}
where $W$ and $\mathcal U$ are described explicitly in terms of the entries of the matrices $U$,~$V$.
The first equation in \eqref{ZZ-lds} has apparent singularities but otherwise exhibits the same (generalized) mo\-no\-dromy associated to \eqref{laxpp} (in $\mathbb P {\rm SL}_2$), while the second equation describes the isomonodromic deformation of the former and is presented in the form of a non-stationary Schr\"odinger equation with imaginary time.

The Hamiltonian operators corresponding to each Painlev\'e equation {I}--{VI} from the second equation of the system~\eqref{ZZ-lds} are, in particular, a natural quantization of those corresponding to the Calogero-like Painlev\'e equations (those obtained from the first equation of the system~\eqref{ZZ-lds}). These operators that are called the quantum Calogero--Painlev\'e Hamiltonian system, are obtained in single variable representation.
The question then arises as to whether a similar description is possible in the multi-particle case; the na\"ive approach of considering the matrices as $2\times 2$ blocks does not lead to equations of the same type as \eqref{ZZ-lds}. The quantization of the Hamiltonians from~\cite{BerCafRub} does not seem to be the direct analog of~\eqref{ZZ-lds} because it is an equation where the wave equation plays the role rather of the (scalar) ``quantum tau function''.

Also, in \cite{Eyn}, the authors use the topological recursion on spectral curves of different genuses which results in wave functions that satisfy a family of partial differential equations. In fact, these PDEs are the quantization of the original spectral curves. As an application of their theorem, they introduce a system of PDEs corresponding to Painlev\'e transcendents whose assigned Hamiltonian system has significant similarities to the system of Hamiltonian operators that we introduced in this paper for the quantum Calogero--Painlev\'e system.

Finally, we comment on the possible relationship with equations of the Knizhnik--Zamo\-lod\-chi\-kov (KZ) type. For the single-particle case, in \cite{Nag}, H.~Nagoya provides a representation-theoretic correspondence between the Schr\"odinger equation for quantum Painlev\'e~{VI} (single-particle) and the Knizhnik--Zamolodchikov (KZ) equation, and between the Schr\"odinger equation for quantum Painlev\'e {II--V} (single-particle) and the confluent KZ equations that are defined in~\cite{JimNag}. These correspondences are proved directly by showing relations between the integral representations for the solutions to the quantum Painlev\'e equations and solutions to the (confluent) KZ equations. In this case, the present paper should play a similar role for $\mathfrak{sl}_n$ special solutions of the KZ equations.
We plan to address this possible relationship in a subsequent publication.

\appendix
\section{Proofs}

The starting point of all the following computations is to take the integral representation \eqref{phiNp}
\begin{equation*}
\Phi(\vec z;t)=\int_{\Gamma} \prod_{1\leq i<j\leq m}(u_i-u_j)^{2\hbar} \prod_{\rho,i}(z_{\rho}-u_i)\Theta_J(u_i) \d u_i,
\end{equation*}
and apply to it the direct sum of the second-order parts in the quantum Hamiltonians \eqref{NagHamil}.

For the sake of simplicity of the notation in the following computations, We denote by $\langle F(u_1,\dots ,u_m)\rangle$ the un-normalized expectation value as follows
\begin{equation*}
\langle F(u_1,\dots ,u_m)\rangle :=\int_{\Gamma} \prod_{1\leq i<j\leq m}(u_i-u_j)^{2\hbar} \prod_{i}\Theta_J(u_i)F(u_1,\dots ,u_m)\d u_i.
\end{equation*}

\subsection{Quantum Painlev\'e II}

Define $P(\vec z)=\prod_{\rho,i} ({ z}_{\rho}-u_i)$ and $\Delta=\prod_{1\leq i<j\leq m}(u_i-u_j)$, then
\begin{gather*}
\sum_{\rho}\partial_{{ z}_{\rho}}^2 P=
\sum_{\rho} P \sum_{i\neq j}\frac{1}{({ z}_{\rho}-u_i)({ z}_{\rho}-u_j)} \\
\hphantom{\sum_{\rho}\partial_{{ z}_{\rho}}^2 P}{}
=
\sum_{\rho}P\sum_{i\neq j}\left(\frac{1}{({ z}_{\rho}-u_i)(u_i-u_j)}-\frac{1}{({ z}_{\rho}-u_j)(u_i-u_j)}\right).
\end{gather*}
This yields
\begin{gather*}
\hbar^2 \sum_{\rho}\partial_{{ z}_{\rho}}^2\Phi
= \hbar^2\bigg\langle P\sum_{\rho}\sum_{i\neq j}\left(\frac{1}{({ z}_{\rho}-u_i)(u_i-u_j)}-\frac{1}{({ z}_{\rho}-u_j)(u_i-u_j)}\right)\bigg\rangle \\
\hphantom{\hbar^2 \sum_{\rho}\partial_{{ z}_{\rho}}^2\Phi}{}
=
2\hbar^2\bigg\langle P\sum_{\rho}\sum_{i\neq j}\frac{1}{({ z}_{\rho}-u_i)(u_i-u_j)}\bigg\rangle
\\
\hphantom{\hbar^2 \sum_{\rho}\partial_{{ z}_{\rho}}^2\Phi}{}
= \hbar \int\sum_{i} \partial_{u_i}\big(\Delta^{2\hbar}\big)\sum_{\rho}\frac{1}{{ z}_{\rho}-u_i}P\prod_{k}\Theta(u_k) \d u_i.
\end{gather*}
We now use integration by parts in the integrand and obtain
\begin{gather*}
= -\hbar\int \Delta^{2\hbar}\sum_{\rho}\sum_{i}\partial_{u_i}\bigg(\frac{1}{{ z}_{\rho}-u_i}P\prod_{k}\Theta(u_k)\bigg)\d u_i \\
=
-\hbar \int \Delta^{2\hbar}\sum_{\rho}\sum_{i} \left(\frac{P}{({ z}_\rho-u_i)^2}+\frac{P_{u_i}}{{ z}_\rho-u_i}-\frac{2u_i^2+t}{{ z}_\rho-u_i}P\right)\prod_{k}\Theta(u_k)\d u_i.
\end{gather*}
After simplifying we obtain
\begin{gather*}
\frac{\hbar^2}{2} \sum_{\rho}\partial_{{ z}_{\rho}}^2\Phi
=
 -\frac{\hbar}{2}\underset{\rho\neq\sigma}{\sum_{\rho,\sigma}} \frac{\partial_{{ z}_{\rho}}-\partial_{{ z}_{\sigma}}}{{ z}_{\rho}-{ z}_{\sigma}} \Phi + \hbar\sum_{\rho}\left({ z}_{\rho}^2+\frac{t}{2}\right)\partial_{{ z}_{\rho}}\Phi- \hbar m \sum_{\rho}{ z}_{\rho}\Phi +\hbar N \partial_t \Phi .
\end{gather*}
Rearranging the terms appropriately, we obtain the Schr\"odinger equation with the Hamiltonian~\eqref{CP2}:
\begin{gather*}
H_{\rm II}= \frac{\hbar}{2}\underset{\rho\neq\sigma}{\sum_{\rho,\sigma}} \frac{\partial_{{ z}_{\rho}}-\partial_{{ z}_{\sigma}}}{{ z}_{\rho}-{ z}_{\sigma}} +\frac{\hbar^2}{2} \sum_{\rho}\partial_{{ z}_{\rho}}^2 -\hbar\sum_{\rho} \left({ z}_{\rho}^2+\frac{t}{2}\right)\partial_{{ z}_{\rho}} +m\hbar \sum_{\rho} { z}_{\rho} .
\end{gather*}

\subsection{Quantum Painlev\'e III}
The initial setup matches the previous case, except that we need to consider the direct sum of the operators ${ z}_\rho^2 \pa_{{ z}_\rho}^2$ acting on the wave function. To this end, we observe that
\begin{gather*}
\sum_{\rho} { z}_{\rho}^2 \partial_{{ z}_{\rho}}^2 P({\vec z})
=
\sum_{\rho} P \sum_{i\neq j}\frac{{ z}_{\rho}^2}{({ z}_{\rho}-u_i)({ z}_{\rho}-u_j)} \\
\hphantom{\sum_{\rho} { z}_{\rho}^2 \partial_{{ z}_{\rho}}^2 P({\vec z})}{}
= \sum_{\rho} P \sum_{i\neq j} \left( 1+\frac{u_i^2}{({ z}_{\rho}-u_i)(u_i-u_j)}- \frac{u_j^2}{({ z}_{\rho}-u_j)(u_i-u_j)} \right).
\end{gather*}
This yields
\begin{gather*}
\hbar^2\sum_{\rho}{ z}_{\rho}^2\partial_{{ z}_{\rho}}^2 \Phi
= \hbar^2\bigg\langle P\sum_{\rho}\sum_{i\neq j}\left( 1+\frac{u_i^2}{({ z}_{\rho}-u_i)(u_i-u_j)}- \frac{u_j^2}{({ z}_{\rho}-u_j)(u_i-u_j)} \right)\bigg\rangle \\
\hphantom{\hbar^2\sum_{\rho}{ z}_{\rho}^2\partial_{{ z}_{\rho}}^2 \Phi}{}
=
\hbar^2Nm(m-1)\Phi+ \hbar Nm\Phi+ \hbar N^2 m \Phi- \hbar N\sum_{\sigma}{ z}_{\sigma}\partial_{{ z}_{\sigma}}\Phi- \hbar \sum_{\rho}{ z}_{\rho}\sum_{\sigma}\partial_{{ z}_{\sigma}}\Phi
\\
\hphantom{\hbar^2\sum_{\rho}{ z}_{\rho}^2\partial_{{ z}_{\rho}}^2 \Phi=}{} +\hbar \sum_{\rho}({ z}_{\rho}^2+b{ z}_{\rho}+t)\partial_{{ z}_{\rho}}\Phi- \hbar bNm\Phi- \hbar Nm\Phi+ \hbar \sum_{\rho}\partial_{{ z}_{\rho}}\Phi- \hbar m \sum_{\rho}{ z}_{\rho}\Phi
\\
\hphantom{\hbar^2\sum_{\rho}{ z}_{\rho}^2\partial_{{ z}_{\rho}}^2 \Phi=}{}
-\hbar N\bigg\langle\! \sum_{i}u_i P\!\bigg\rangle -\hbar \underset{\rho\neq\sigma}{\sum_{\rho,\sigma}}\frac{{ z}_{\rho}^2\partial_{{ z}_{\rho}}-{ z}_{\sigma}^2\partial_{{ z}_{\sigma}}}{{ z}_{\rho}-{ z}_{\sigma}}\Phi +\hbar \sum_{\rho}{ z}_{\rho}\sum_{\sigma}\partial_{{ z}_{\sigma}}\Phi- \hbar \sum_{\rho}{ z}_{\rho}\partial_{{ z}_{\rho}}\Phi
\\
\hphantom{\hbar^2\sum_{\rho}{ z}_{\rho}^2\partial_{{ z}_{\rho}}^2 \Phi=}{}
+\hbar (N-1)\sum_{\sigma}{ z}_{\sigma}\partial_{{ z}_{\sigma}}\Phi.
\end{gather*}
Rearranging the terms we obtain the following expression
\begin{gather}
\hbar^2\sum_{\rho}{ z}_{\rho}^2\partial_{{ z}_{\rho}}^2 \Phi
=
-\hbar\underset{\rho\neq\sigma}{\sum_{\rho,\sigma}}\frac{{ z}_{\rho}^2\partial_{{ z}_{\rho}}-{ z}_{\sigma}^2\partial_{{ z}_{\sigma}}}{{ z}_{\rho}-{ z}_{\sigma}}\Phi+\hbar \sum_{\rho}\big({ z}_{\rho}^2+b{ z}_{\rho}+t\big)\partial_{{ z}_{\rho}}\Phi -m\hbar \sum_{\rho}{ z}_{\rho}\Phi \nn
\\
\hphantom{\hbar^2\sum_{\rho}{ z}_{\rho}^2\partial_{{ z}_{\rho}}^2 \Phi=}{}
+ t\hbar N \partial_{t}\Phi+\hbar^2Nm(m-1)\Phi+ \hbar N^2m\Phi- bNm\hbar\Phi- \hbar \sum_{\rho}{ z}_{\rho}\partial_{{ z}_{\rho}}\Phi \nn \\
\hphantom{\hbar^2\sum_{\rho}{ z}_{\rho}^2\partial_{{ z}_{\rho}}^2 \Phi=}{}
- \hbar N\bigg\langle \sum_i \left(\frac{t}{u_i}+u_i\right)P\bigg\rangle.\label{p3euler}
\end{gather}
In order to handle the remaining expectation value we need to derive some further identities: consider the Euler differential operator
\begin{equation*}
\mathbb E:= \sum_{i} u_i \partial_{u_i}.
\end{equation*}
Applying this operator $\mathbb E$ to the integrand of $\Phi_m^{\rm III}$,
\begin{equation*}
\prod_{1\leq i<j \leq m}(u_i-u_j)^{2\hbar} \prod_{\rho,i}({ z}_{\rho}-u_i)u_i^{-b-1}{\rm e}^{(\frac{t}{u_i}-u_i)}
\end{equation*}
gives the following expression
\begin{gather*}
\sum_{i} u_i \partial_{u_i}\bigg(
\prod_{1\leq i<j \leq m}(u_i-u_j)^{2\hbar}
\prod_{\rho,i}({ z}_{\rho}-u_i)u_i^{-b-1}{\rm e}^{(\frac{t}{u_i}-u_i)}\bigg) \nn
\\
=\sum_{i} u_i \bigg[\prod_{i<j}(u_i-u_j)^{2\hbar} \bigg((2\hbar)\sum_{i\neq j}\frac{1}{u_i-u_j}\bigg) \prod_{\rho,i}({ z}_{\rho}-u_i)u_i^{-b-1}{\rm e}^{(\frac{t}{u_i}-u_i)} \nn
\\
\quad{} +\prod_{i<j}(u_i-u_j)^{2\hbar} \prod_{\rho,i}({ z}_{\rho}-u_i)\bigg(\sum_{\rho}\frac{-1}{{ z}_{\rho}-u_i}\bigg)u_i^{-b-1}{\rm e}^{(\frac{t}{u_i}-u_i)} \nn
\\
\quad{} +\prod_{i<j}(u_i-u_j)^{2\hbar} \prod_{\rho,i}({ z}_{\rho}-u_i)u_i^{-b-1}{\rm e}^{(\frac{t}{u_i}-u_i)}\left(\frac{-b-1}{u_i}-\frac{t}{u_i^2}-1\right)\bigg]
\\
=
\sum_i u_i\bigg[2\hbar \sum_{i\neq j}\frac{1}{u_i-u_j}-\sum_{\rho}\frac{1}{{ z}_{\rho}-u_i}-\frac{b+1}{u_i}-\frac{t}{u_i^2}-1\bigg]\Delta^{2\hbar}P\prod_{i}\Theta(u_i)
\\
= \Delta^{2\hbar}\bigg[\hbar m(m-1)P+ NmP- \sum_{\rho}{z}_{\rho}\partial_{z_{\rho}}P- (b+1)mP- \sum_i\left(\frac{t}{u_i}+u_i\right)P\bigg] \prod_{i}\Theta(u_i) .
\end{gather*}
Then we observe that
\begin{gather*}
\int \sum_j u_j \partial_{u_j}\bigg(\Delta^{2\hbar}P\prod_{i}\Theta(u_i)\d u_i\bigg) =
\int \sum_j \partial_{u_j}\bigg(u_j \Delta^{2\hbar}P\prod_{i}\Theta(u_i)\d u_i\bigg) - m\Phi ( z) ,
\end{gather*}
and the first integral is zero because the integrand is a divergence of a vector.
Therefore,
\begin{gather*}
\underbrace{
\int\mathbb E \bigg(\Delta^{2\hbar}P\prod_{i}\Theta(u_i) \d u_i\bigg)}_{=-m\Phi } \\
\qquad{} =
\hbar m(m-1)\Phi + Nm\Phi - \sum_{\rho}{ z}_{\rho}\partial_{{ z}_{\rho}}\Phi
-(b+1)m\Phi
-
 \bigg\langle\sum_i \left(\frac{t}{u_i}+u_i\right)P\bigg\rangle\\
\Longrightarrow
\quad
\bigg\langle \sum_i (\frac{t}{u_i}+u_i)P\bigg\rangle = \hbar m(m-1)\Phi + Nm\Phi - \sum_{\rho}{ z}_{\rho}\partial_{{ z}_{\rho}}\Phi -b\Phi .
\end{gather*}
Substituting the left side into the equation \eqref{p3euler} results in the following conclusion
\begin{gather*}
t\hbar N\partial_{t}\Phi =
\sum_{\rho}\big(\hbar^2 { z}_{\rho}^2\partial_{z_{\rho}}^2-\hbar \big({ z}_{\rho}^2+(b+N-1){ z}_{\rho}+t\big)\partial_{{ z}_{\rho}}+m\hbar { z}_{\rho}\big)\Phi +\hbar\underset{\rho\neq\sigma}{\sum_{\rho,\sigma}}\frac{{ z}_{\rho}^2\partial_{{ z}_{\rho}}-{ z}_{\sigma}^2\partial_{{ z}_{\sigma}}}{{ z}_{\rho}-{ z}_{\sigma}}\Phi.
\end{gather*}
Hence, the general Hamiltonian operator for quantum Painlev\'e {III} equation is given by
\begin{gather*}
tH_{\rm III}= \hbar\underset{\rho\neq\sigma}{\sum_{\rho,\sigma}}\frac{{ z}_{\rho}^2\partial_{{ z}_{\rho}}-{ z}_{\sigma}^2\partial_{{ z}_{\sigma}}}{{ z}_{\rho}-{ z}_{\sigma}}+ \sum_{\rho}\big(\hbar^2 { z}_{\rho}^2\partial_{{ z}_{\rho}}^2-\hbar \big({ z}_{\rho}^2+(b+N-1){ z}_{\rho}+t\big)\partial_{{ z}_{\rho}}+m\hbar { z}_{\rho}\big).
\end{gather*}

\subsection{Quantum Painlev\'e IV}
We follow the same general scheme and consider the sum of the $z_\rho\partial_{z\rho}^2$ term applied to $P$:
\begin{gather*}
\sum_{\rho}{ z}_{\rho}\partial_{{ z}_{\rho}}^2 P=
\sum_{\rho}{ z}_{\rho} P \sum_{i\neq j}\frac{1}{({ z}_{\rho}-u_i)({ z}_{\rho}-u_j)} \nn \\
\hphantom{\sum_{\rho}{ z}_{\rho}\partial_{{ z}_{\rho}}^2 P}{}
=
\sum_{\rho}P\sum_{i\neq j}\left(\frac{u_i}{({ z}_{\rho}-u_i)(u_i-u_j)}-\frac{u_j}{({ z}_{\rho}-u_j)(u_i-u_j)}\right).
\end{gather*}
This yields
\begin{gather*}
\hbar^2 \sum_{\rho}{ z}_{\rho}\partial_{{ z}_{\rho}}^2\Phi
= \hbar^2\bigg\langle P\sum_{\rho}\sum_{i\neq j}\left(\frac{u_i}{({ z}_{\rho}-u_i)(u_i-u_j)}-\frac{u_j}{({ z}_{\rho}-u_j)(u_i-u_j)}\right)\bigg\rangle
\\
\hphantom{\hbar^2 \sum_{\rho}{ z}_{\rho}\partial_{{ z}_{\rho}}^2\Phi}{} =
2\hbar^2\bigg\langle P\sum_{\rho}\sum_{i\neq j}\frac{u_i}{({ z}_{\rho}-u_i)(u_i-u_j)}\bigg\rangle
\\
\hphantom{\hbar^2 \sum_{\rho}{ z}_{\rho}\partial_{{ z}_{\rho}}^2\Phi}{} = \hbar \int\sum_{i} \partial_{u_i}\big(\Delta^{2\hbar}\big)\sum_{\rho}\frac{u_i}{{ z}_{\rho}-u_i}P\prod_{k}\Theta(u_k)\d u_i
\\
\hphantom{\hbar^2 \sum_{\rho}{ z}_{\rho}\partial_{{ z}_{\rho}}^2\Phi}{} =
-\hbar\int \Delta^{2\hbar}\sum_{\rho}\sum_{i}\partial_{u_i}\bigg(\left(-1+\frac{{ z}_{\rho}}{{ z}_{\rho}-u_i}\right)P\prod_{k}\Theta(u_k)\bigg)\d u_i,
\end{gather*}
which gives
\begin{gather*}
 \hbar^2 \sum_{\rho}{ z}_{\rho}\partial_{{ z}_{\rho}}^2\Phi
=
 -\hbar\underset{\rho\neq\sigma}{\sum_{\rho,\sigma}} \frac{{ z}_{\rho}\partial_{{ z}_{\rho}}-{ z}_{\sigma}\partial_{{ z}_{\sigma}}}{{ z}_{\rho}-{ z}_{\sigma}} \Phi + \hbar\sum_{\rho}\big({ z}_{\rho}^2+t{ z}_{\rho}+b\big)\partial_{{ z}_{\rho}}\Phi \\
\hphantom{\hbar^2 \sum_{\rho}{ z}_{\rho}\partial_{{ z}_{\rho}}^2\Phi=}{} -\hbar Nm t \Phi + \hbar N\partial_{t}\Phi -\hbar m\sum_{\rho}{ z}_{\rho}\Phi .
\end{gather*}
Rearranging the terms in the above expression, the general Hamiltonian operator for quantum Painlev\'e {IV} equation is given by
\begin{gather*}
H_{\rm IV}= \hbar\underset{\rho\neq\sigma}{\sum_{\rho,\sigma}}\frac{{ z}_{\rho}\partial_{{ z}_{\rho}}-{ z}_{\sigma}\partial_{{ z}_{\sigma}}}{{ z}_{\rho}-{ z}_{\sigma}}+ \sum_{\rho}\big(\hbar^2 { z}_{\rho}\partial_{{ z}_{\rho}}^2-\hbar \big({ z}_{\rho}^2+t{ z}_{\rho}+b\big)\partial_{{ z}_{\rho}}+m\hbar { z}_{\rho}\big) + \hbar Nmt .
\end{gather*}

\subsection{Quantum Painlev\'e V}
We start from the same set up as previous operators
\begin{gather*}
\sum_{\rho} { z}_{\rho}({ z}_{\rho}-1)\partial_{{ z}_{\rho}}^2P
=
\sum_{\rho}P \sum_{i\neq j}\frac{{ z}_{\rho}({ z}_{\rho}-1)}{({ z}_{\rho}-u_i)({ z}_{\rho}-u_j)} \\
\hphantom{\sum_{\rho} { z}_{\rho}({ z}_{\rho}-1)\partial_{{ z}_{\rho}}^2P}{}
=
\sum_{\rho}P \sum_{i\neq j} \left(1+ \frac{u_i(u_i-1)}{({ z}_{\rho}-u_i)(u_i-u_j)}-\frac{u_j(u_j-1)}{({ z}_{\rho}-u_j)(u_i-u_j)}\right).
\end{gather*}
This yields
\begin{gather*}
\hbar^2\sum_{\rho} { z}_{\rho}({ z}_{\rho}-1)\partial_{{ z}_{\rho}}^2\Phi =
\hbar^2\bigg\langle P\sum_{\rho}\sum_{i\neq j}\left(1+ \frac{u_i(u_i-1)}{({ z}_{\rho}-u_i)(u_i-u_j)}-\frac{u_j(u_j-1)}{({ z}_{\rho}-u_j)(u_i-u_j)}\right)\bigg\rangle
\\
\hphantom{\hbar^2\sum_{\rho} { z}_{\rho}({ z}_{\rho}-1)\partial_{{ z}_{\rho}}^2\Phi}{}
=\hbar^2Nm(m-1)\Phi +2\hbar^2\bigg \langle P\sum_{\rho}\sum_{i\neq j}\frac{u_i(u_i-1)}{({ z}_{\rho}-u_i)(u_i-u_j)}\bigg\rangle
\\
\hphantom{\hbar^2\sum_{\rho} { z}_{\rho}({ z}_{\rho}-1)\partial_{{ z}_{\rho}}^2\Phi}{}
=\hbar^2Nm(m-1)\Phi +\hbar\int\sum_{i}\partial_{u_i}\big(\Delta^{2\hbar}\big)\sum_{\rho}\frac{u_i(u_i-1)}{{ z}_{\rho}-u_i}P\prod_{k}\Theta(u_k)\d u_i,
\\
\hbar^2\sum_{\rho} { z}_{\rho}({ z}_{\rho}-1)\partial_{{ z}_{\rho}}^2\Phi =
-\hbar\underset{\rho\neq\sigma}{\sum_{\rho,\sigma}}\frac{{ z}_{\rho}({ z}_{\rho}-1)\partial_{{ z}_{\rho}}-{ z}_{\sigma}({ z}_{\sigma}-1)\partial_{{ z}_{\sigma}}}{{ z}_{\rho}-{ z}_{\sigma}}\Phi \nn
\\
\hphantom{\hbar^2\sum_{\rho} { z}_{\rho}({ z}_{\rho}-1)\partial_{{ z}_{\rho}}^2\Phi =}{}
 -\hbar\sum_{\rho}\big(t{ z}_{\rho}^2-(b+c+t){ z}_{\rho}+b\big)\partial_{{ z}_{\rho}}\Phi -\hbar Nm(b+c+t)\Phi \nn \\
\hphantom{\hbar^2\sum_{\rho} { z}_{\rho}({ z}_{\rho}-1)\partial_{{ z}_{\rho}}^2\Phi =}{}
 + m\hbar t \sum_{\rho}{ z}_{\rho}\Phi + \hbar N t\partial_t \Phi + \hbar^2 Nm(m-1)\Phi + \hbar Nm(N-1)\Phi.
\end{gather*}
Therefore, the general Hamiltonian operator for quantum Painlev\'e~{V} equation is given by
\begin{gather*}
tH_{\rm V}=
\hbar\underset{\rho\neq\sigma}{\sum_{\rho,\sigma}}\frac{{ z}_{\rho}({ z}_{\rho}-1)\partial_{{ z}_{\rho}}-{ z}_{\sigma}({ z}_{\sigma}-1)\partial_{{ z}_{\sigma}}}{{ z}_{\rho}-{ z}_{\sigma}}\\
\hphantom{tH_{\rm V}=}{}
+ \sum_{\rho}\big(\hbar^2 { z}_{\rho}({ z}_{\rho}-1)\partial_{{ z}_{\rho}}^2+\hbar \big(t{ z}_{\rho}^2-(b+c+t){ z}_{\rho}+b\big)\partial_{{ z}_{\rho}}
- m\hbar t{ z}_{\rho}\big)\\
\hphantom{tH_{\rm V}=}{}
+\hbar Nm(b+c+t)-\hbar^2Nm(m-1)-\hbar Nm(N-1).
\end{gather*}

\subsection{Quantum Painlev\'e VI}
The Laplacian operator applied to $P$ yields
\begin{gather*}
\sum_{\rho}{ z}_{\rho}({ z}_{\rho}-1)({ z}_{\rho}-t)\partial_{{ z}_{\rho}}^2 P
=\sum_{\rho}{ z}_{\rho}({ z}_{\rho}-1)({ z}_{\rho}-t)P\sum_{i\neq j}\frac{1}{({ z}_{\rho}-u_i)({ z}_{\rho}-u_j)} \\
\qquad =P\sum_{\rho}\sum_{i\neq j}\frac{{ z}_{\rho}({ z}_{\rho}-1)({ z}_{\rho}-t)}{({ z}_{\rho}-u_i)({ z}_{\rho}-u_j)} \\
\qquad =P\sum_{\rho}\sum_{i\neq j} \left( ({ z}_{\rho}-(t+1)+u_i+u_j )+\frac{u_i(u_i-1)(u_i-t)}{(u_i-u_j)({ z}_{\rho}-u_i)}
 -\frac{u_j(u_j-1)(u_j-t)}{(u_i-u_j)({ z}_{\rho}-u_j)}\right).
\end{gather*}
This yields
\begin{gather*}
\hbar^2\sum_{\rho}{ z}_{\rho}({ z}_{\rho}-1)({ z}_{\rho}-t)\partial_{{ z}_{\rho}}^2 \Phi\\
=
\hbar^2 \bigg\langle P\sum_{\rho}\sum_{i\neq j} \left( ({ z}_{\rho}-(t+1)+u_i+u_j )+\frac{u_i(u_i-1)(u_i-t)}{(u_i-u_j)({ z}_{\rho}-u_i)}
-\frac{u_j(u_j-1)(u_j-t)}{(u_i-u_j)({ z}_{\rho}-u_j)}\right)\bigg\rangle.
\end{gather*}
Upon rearranging of the terms we obtain the equation
\begin{gather}
\hbar^2\sum_{\rho}{ z}_{\rho}({ z}_{\rho}-1) ({ z}_{\rho}-t)\partial_{{ z}_{\rho}}^2\Phi =
-\hbar\underset{\rho\neq\sigma}{\sum_{\rho,\sigma}}\frac{{ z}_{\rho}({ z}_{\rho}-1)({ z}_{\rho}-t)\partial_{{ z}_{\rho}}-{ z}_{\sigma}({ z}_{\sigma}-1)({ z}_{\sigma}-t)\partial_{{ z}_{\sigma}}}{{ z}_{\rho}-{ z}_{\sigma}} \Phi \nn \\
\qquad{} +t(t-1)\hbar\partial_t \Phi
+\hbar\sum_{\rho}\big((a+b)({ z}_{\rho}-1)({ z}_{\rho}-t)+c{ z}_{\rho}({ z}_{\rho}-t)+d{ z}_{\rho}({ z}_{\rho}-1)\big) \partial_{{ z}_{\rho}}\Phi \nn \\
\qquad{} -\hbar\sum_{\rho}{ z}_{\rho}^2\partial_{{ z}_{\rho}}\Phi + \hbar\sum_{\rho}{ z}_{\rho}\partial_{{ z}_{\rho}}\Phi + \big(2\hbar Nm-\hbar m- \hbar^2 m^2\big)\sum_{\rho}{ z}_{\rho}\Phi \nn \\
\qquad{} +\big(\hbar^2 Nm-\hbar tN^2m-\hbar N^2m+m^2\hbar^2Nt-d\hbar Nmt+\hbar Nmt+(b+d)\hbar Nm\big)\Phi \nn \\
\qquad{} +(\hbar N^2-\hbar^2 N)\bigg\langle \sum_i u_i P\bigg\rangle + t(t-1)\hbar N\bigg\langle \sum_i \frac{d}{t-u_i}P\bigg\rangle .\label{p6op}
\end{gather}
Now, instead of applying the Euler operator, we consider the operator
\begin{equation*}
{\mathbb L}:=\sum_i \partial_{u_i} (u_i(1-u_i) ).
\end{equation*}
We apply $\mathbb L$ to the integrand
\begin{equation*}
J( u, z):=
\prod_{1\leq i<j\leq m}(u_i-u_j)^{2\hbar} \prod_{\rho, i} ({ z}_{\rho}-u_i)u_i^{-a-b-1}(1-u_i)^{-c-1}(t-u_i)^{-d}.
\end{equation*}
After some calculations we obtain
\begin{gather}
\mathbb L J( \vec u, \vec z)=
\sum_i \bigg[(1-2u_i)\prod_{i<j}(u_i-u_j)^{2\hbar} \prod_{\rho, i} ({ z}_{\rho}-u_i)u_i^{-a-b-1}(1-u_i)^{-c-1}(t-u_i)^{-d} \nn \\
{}+u_i(1-u_i)\prod_{i<j}(u_i-u_j)^{2\hbar}\bigg(2\hbar \sum_{i\neq j}\frac{1}{u_i-u_j}\bigg) \prod_{\rho, i} ({ z}_{\rho}-u_i)u_i^{-a-b-1}(1-u_i)^{-c-1}(t-u_i)^{-d} \nn \\
{} +u_i(1-u_i)\prod_{i<j}(u_i-u_j)^{2\hbar} \prod_{\rho, i} ({ z}_{\rho}-u_i)\bigg(\sum_{\rho}\frac{-1}{{ z}_{\rho}-u_i}\bigg)\prod_{i}u_i^{-a-b-1}(1-u_i)^{-c-1}(t-u_i)^{-d} \nn \\
{}+u_i(1-u_i)\prod_{i<j}(u_i-u_j)^{2\hbar} \prod_{\rho, i} \frac{({ z}_{\rho}-u_i)}
{u_i^{a+b+1}(1-u_i)^{c+1}(t-u_i)^{d}}\nonumber\\
\quad{}\times \left(\frac{-a-b-1}{u_i}+\frac{c+1}{1-u_i}+\frac{d}{t-u_i}\right)
\bigg]\nonumber
\\
{} =
\Delta^{2\hbar}\bigg[mP-2\sum_i u_i P-2\hbar(m-1)\sum_i u_i P- m\sum_{\rho}{ z}_{\rho}P-N\sum_i u_i P+ NmP \nn \\
{} -\sum_{\rho}{ z}_{\rho}(1-{ z}_{\rho})\partial_{{ z}_{\rho}}P+(-a-b-d-1)mP+tdmP \nonumber\\
{} +(a+b+c+d+2)\sum_i u_i P -t(t-1)\sum_i \frac{d}{t-u_i}P\bigg]\prod_i \Theta(u_i) .\label{E3}
\end{gather}
Integrating \eqref{E3}, one obtains zero because the integrand $\mathbb L J$ can be viewed as the divergence of a vector field. Therefore,
\begin{gather*}
t(t-1)\bigg\langle \sum_i \frac{d}{t-u_i}P\bigg\rangle =
(-\hbar m+Nm+(-b-d)m+tdm)\Phi \nn \\
\hphantom{t(t-1)\bigg\langle \sum_i \frac{d}{t-u_i}P\bigg\rangle =}{}
+ (-2\hbar(m-1)-N+a+b+c+d)\bigg\langle \sum_i u_i P\bigg\rangle \\
\hphantom{t(t-1)\bigg\langle \sum_i \frac{d}{t-u_i}P\bigg\rangle =}{}
-m\sum_{\rho}{ z}_{\rho}\Phi - \sum_{\rho}{ z}_{\rho}(1-{ z}_{\rho})\partial_{{ z}_{\rho}}\Phi.
\end{gather*}
Substituting the l.h.s.\ into the equation \eqref{p6op} yields the following result
\begin{gather*}
\hbar^2\sum_{\rho}{ z}_{\rho}({ z}_{\rho}-1)({ z}_{\rho}-t)\partial_{{ z}_{\rho}}^2\Phi =
-\hbar \underset{\rho\neq\sigma}{\sum_{\rho,\sigma}}\frac{{ z}_{\rho}({ z}_{\rho}-1)({ z}_{\rho}-t)\partial_{{ z}_{\rho}}-{ z}_{\sigma}({ z}_{\sigma}-1)({ z}_{\sigma}-t)\partial_{{ z}_{\sigma}}}{{ z}_{\rho}-{ z}_{\sigma}} \Phi \\
\qquad{} +t(t-1)\hbar\partial_t \Phi+\hbar \sum_{\rho}\big((a+b)({ z}_{\rho}-1)({ z}_{\rho}-t)+c{ z}_{\rho}({ z}_{\rho}-t)+d{ z}_{\rho}({ z}_{\rho}-1)\big)\partial_{{ z}_{\rho}}\Phi \\
\qquad{} +\hbar (1-N)\sum_{\rho}{ z}_{\rho}(1-{ z}_{\rho})\partial_{{ z}_{\rho}}\Phi + \hbar m(N-1-\hbar m)\sum_{\rho}{ z}_{\rho}\Phi
+ \hbar m N(\hbar m+1-N)t\Phi.
\end{gather*}
Therefore, the general Hamiltonian operator for quantum Painlev\'e {VI} equation is given by
\begin{gather*}
t(t-1)H_{\rm VI}=
 \hbar\underset{\rho\neq\sigma}{\sum_{\rho,\sigma}}\frac{{ z}_{\rho}({ z}_{\rho}-1)({ z}_{\rho}-t)\partial_{{ z}_{\rho}}-{ z}_{\sigma}({ z}_{\sigma}-1)({ z}_{\sigma}-t)\partial_{{ z}_{\sigma}}}{{ z}_{\rho}-{ z}_{\sigma}}\\
 \hphantom{t(t-1)H_{\rm VI}=}{}
+ \sum_{\rho}\hbar^2 { z}_{\rho}({ z}_{\rho}-1)({ z}_{\rho}-t)\partial_{{ z}_{\rho}}^2 \\
\hphantom{t(t-1)H_{\rm VI}=}{}
-\hbar \sum_{\rho}\big((a+b)({ z}_{\rho}-1)({ z}_{\rho}-t)+c{ z}_{\rho}({ z}_{\rho}-t)+(d+N-1){ z}_{\rho}({ z}_{\rho}-1)\big)\partial_{{ z}_{\rho}}
\\
\hphantom{t(t-1)H_{\rm VI}=}{}
- \hbar m(N-1-\hbar m)\sum_{\rho}{ z}_{\rho}- \hbar m N(\hbar m+1-N)t.
\end{gather*}

\begin{Remark}
All these operators reduce to the Hamiltonian operators in \eqref{NagHamil} for $N=1$.
\end{Remark}

\subsection*{Acknowledgements}
The work of F.M.\ and M.B.\ was supported in part by the Natural Sciences and Engineering Research Council of Canada (NSERC) grant RGPIN-2016-06660.

\pdfbookmark[1]{References}{ref}
\LastPageEnding

\end{document}